\documentclass[conference]{IEEEtran}
\IEEEoverridecommandlockouts
\usepackage{cite}
\usepackage{amsmath,amssymb,amsfonts}
\usepackage{algorithmic}
\usepackage{graphicx}
\usepackage{textcomp}
\usepackage{xcolor}

\usepackage{makeidx} 
\usepackage{amsthm}
\usepackage{algorithm}
\usepackage{relsize}
\usepackage{subfigure}
\usepackage{url}
\usepackage{cite}
\usepackage[utf8]{inputenc}
\usepackage[english]{babel}
\usepackage{longtable}
\usepackage{tabularx}
\usepackage{ltablex}
\usepackage{array}\usepackage{wrapfig}
\usepackage[utf8]{inputenc}
\usepackage{tikz}
\usetikzlibrary{positioning}
\pagestyle{plain}

\usepackage{mathrsfs}
\usepackage{ragged2e}
\usepackage{verbatim}

\newtheorem{theorem}{Theorem}
\newtheorem{lemma}[theorem]{Lemma}

\newtheorem{fact}[theorem]{Fact}

\newtheorem{construction}[theorem]{Construction}

\theoremstyle{definition}

\newtheorem{example}[theorem]{Example}

\theoremstyle{remark}
\newtheorem{remark}[theorem]{Remark}

\def\BibTeX{{\rm B\kern-.05em{\sc i\kern-.025em b}\kern-.08em
    T\kern-.1667em\lower.7ex\hbox{E}\kern-.125emX}}

\makeatletter
\newcommand*{\rom}[1]{\expandafter\@slowromancap\romannumeral #1@}
\makeatother

\begin{document}

\title{Bounds on Fractional Repetition Codes using Hypergraphs
  \thanks{The part of the work is submitted to ITW 2018.}
 }

\author{\IEEEauthorblockN{Krishna Gopal Benerjee}
\IEEEauthorblockA{\textit{Laboratory of Natural Information Processing,} \\
\textit{Dhirubhai Ambani Institute of Information} \\ \textit{ and Communication Technology,}\\
Gandhinagar, Gujarat, India \\
krishna\_gopal@daiict.ac.in}
\and
\IEEEauthorblockN{Manish K Gupta}
\IEEEauthorblockA{\textit{Laboratory of Natural Information Processing,} \\
\textit{Dhirubhai Ambani Institute of Information} \\ \textit{and Communication Technology,}\\
Gandhinagar, Gujarat, India \\
mankg@computer.org}
}

\maketitle

\begin{abstract}
In the \textit{Distributed Storage Systems} (DSSs), an encoded fraction of information is stored in the distributed fashion on different chunk servers. 
Recently a new paradigm of \textit{Fractional Repetition} (FR) codes have been introduced, in which, encoded data information is stored on distributed servers, where encoding is done using a \textit{Maximum Distance Separable} (MDS) code and a smart replication of packets. 
In this work, we have shown that an FR code is equivalent to a hypergraph. 
Using the correspondence, the properties and the bounds of a hypergraph are directly mapped to the associated FR code. 
In general, the necessary and sufficient conditions for the existence of an FR code is obtained by using the correspondence.
Some of the bounds are new and FR codes meeting these bounds are unknown.
It is also shown that any FR code associated with a linear hypergraph is universally good. 

\end{abstract}

\begin{IEEEkeywords}
Distributed Storage Systems, Fractional Repetition Code, Coding for Distributed Storage, Hypergraphs, Universally good code
\end{IEEEkeywords}

\section{Introduction}\label{intro}
The distributed storage is a well-studied area, which deals with storing the data on distributed nodes in such a manner so that it allows data accessibility at any time and anywhere. 
Many companies such as Microsoft and Google etc. provide such storage services by using distributed data centers.
In the storage systems like Google file system, 
multiple copies of data fragments are stored.
Thus, the system becomes reliable and the file can be retrieved from the system. 
At the same level of redundancy, coding techniques can improve the reliability of the storage system. 

In \cite{5550492}, Dimakis \textit{et al}. proposed a coding scheme called \textit{regenerating codes} which reduces the repair bandwidth, where repair bandwidth is the total number of communicated packets to repair a failed node. 
In the distributed storage system (DSS), encoded data packets are distributed among $n$ nodes such that any data collector can get the complete file by collecting packets from any $k$ $(\leq n)$ nodes. 
Also, a failed node can be repaired by replacing it with a new node having the at-least same amount of the data information. 
The newcomer node is constructed by downloading packets from any $d$ $(\leq n)$ active nodes in the DSS.
The Minimum Bandwidth Regenerating (MBR) codes are optimal with respect to the repair bandwidth but not with respect to the computational complexity. 
Motivated by this, in \cite{5394538}, Rashmi \textit{et al}. proposed a simple construction of Minimum Bandwidth Regenerating codes, where a
survivor node is only required to read and transfer the exact
amount of data to the replacement node without extra computation.  
The construction is extended to new storage code called \textit{Fractional Repetition} (FR) code \cite{rr10,6033980}.
A typical FR code is a two-layer code with an outer MDS code and an inner repetition code, where a linear $[n,k,d]$ code is called a Maximum Distance Separable (MDS) code if $k=n-d+1$. 
Initially, the FR code is defined on homogeneous DSS (DSS with symmetric parameters). 
In such systems, each node has the same amount of encoded packets and the same number of copies of each packets \cite{rr10}. 
In many cases, heterogeneous systems are preferred, in which the node storage capacity and the packet replication factor need not be uniform. 
In \cite{7312417}, an \textit{Adaptive Fractional Repetitions} (Adaptive FR) code is introduced that adapts new parameters of an another FR codes effectively. 
Hence, Adaptive FR codes accept system changes such as disk failure quickly. 
In \cite{7366761}, the minimum distance on an FR code is defined as the minimum number of nodes such that if the nodes fail simultaneously then the system can not recover the complete file. 

For the FR code, the asymmetric node storage capacity helps to reduce the total storage cost and the asymmetric replication factor helps for the availability of favorable packets, where favorable packets are the packets which are downloaded with more frequency.
The asymmetric repair degree and repair bandwidth help to reduce the repair cost of a node in the FR code.
FR codes on asymmetric parameters are studied in \cite{6763122,iet:/content/journals/10.1049/iet-com.2014.1225,7458383,DBLP:journals/corr/abs-1302-3681,DBLP:journals/corr/abs-1303-6801}

In this work, an FR code with asymmetric parameters (node storage capacity, packet replication factor, repair degree, and repair bandwidth) are considered. 
In general, it is shown that any FR code is equivalent to a hypergraph. 
The equivalence between a hypergraph and an FR code leads to the several interesting properties of the FR code.
In general, we provide the existence condition for FR codes using the correspondence.
Bounds on the node storage capacity and the packet replication factor, are obtained for various FR codes. 
A new bound on the replication factor, the number of nodes and the number of packets are obtained such that all the FR codes satisfying the bound are universally good.
Using the correspondence, we have obtained some new bounds.
FR codes meeting the bounds are unknown in the literature. 
It is observed that an FR code constructed using a linear hypergraph is universally good code.
A recursive algorithm is given to construct linear hypergraph.
FR codes which are equivalent to the hypergraph constructed from the algorithm are adaptive FR codes and universally good with asymmetric parameters. 
The universally good adaptive FR codes with asymmetric parameters can adapt system changes effectively. 
The codes allow updating systems without the need for frequent re-configurations.

The paper is organized as follows. 
In Section \ref{sec:2}, we collect some preliminaries on FR codes and hypergraphs with their parameters and properties. 
In Section \ref{sec:3}, the correspondence between hypergraphs and FR codes are studied. 
In the same section, construction of universally good adaptive FR code on heterogeneous DSS is discussed. 
Using the correspondence, various bounds on the parameters of FR codes are obtained in the same section. 
At the end of the Section, the comparison graph is plotted between the reconstruction degree and file size for the FR codes.
Section \ref{sec:4} concludes the paper with some general remarks.

\section{Preliminaries}\label{sec:2}
 In this section, background on FR codes and hypergraphs are given. 
\subsection{Background on Fractional Repetition Codes}

Let a file be divided into $M$ distinct information packets, where information packet symbols are associated with a finite field $\mathbb{F}_q$. 
By using ($\theta$, $M$) MDS code on the information packets, the $M$ information packets are encoded into $\theta$ distinct packets $P_1$, $P_2$, $\ldots$, $P_{\theta}$. 
Because of the property of MDS code, the complete file information can be extracted from any $M$ packets. 
Each packet $P_j$ ($j=1,2,\ldots,\theta$) is replicated $\rho_j$ times in the system. 
All the packets are distributed on $n$ distinct nodes $U_1$, $U_2$, $\ldots$, $U_n$ such that a node $U_i$ ($i=1,2,\ldots,n$) contains $\alpha_i$ packets. 
For the FR code, the maximum node storage capacity and the maximum replication factor are given by $\alpha$ = $\max\{\alpha_i:i=1,2,\ldots,n\}$ and $\rho = \max\{\rho_j:j=1,2,\ldots,\theta\}$ respectively. 
The FR code is denoted by $\mathscr{C}:(n,\theta,\vec{\alpha},\vec{\rho})$, where storage vector is $\vec{\alpha}=(\alpha_1\ \alpha_2\ldots\alpha_n)$ and replication vector is $\vec{\rho}=(\rho_1\ \rho_2\ldots\rho_\theta)$.

\begin{example}
Let a file be splitted into $5\ (=M)$ information packets and encoded into $6\ (=\theta)$ distinct packets $P_1$, $P_2$, $P_3$, $P_4$, $P_5$ and $P_6$ using a $(6,5)$ MDS code. 
The $13$ replicas of the $6\ (=\theta)$ distinct packets are stored in the $5$ distinct nodes $U_1$, $U_2$, $U_3$, $U_4$ and $U_5$ such that $U_1$ = $\{P_1,P_2,P_3,P_4\}$, $U_2$ = $\{P_1,P_5,P_6\}$, $U_3$ = $\{P_2,P_5\}$, $U_4$ = $\{P_3,P_6\}$ and $U_5$ = $\{P_4,P_6\}$.  
The node packet distribution of the FR code 
is illustrated in Table \ref{distribution example}. 
Hence, storage vector is $\vec{\alpha}$ = $(4\ 3\ 2\ 2\ 2)$ 
and replication vector is $\vec{\rho}$ = $(2\ 2\ 2\ 2\ 2\ 3)$. 
In the FR code, $\rho=\max\{\rho_j:j=1,2,\ldots,6\}=3$ and $\alpha=\max\{\alpha_i:i=1,2,\ldots,5\}=4$. Hence, the FR code is denoted by $\mathscr{C}:(n=5,\theta=6,\vec{\alpha} = (4\ 3\ 2\ 2\ 2),\vec{\rho} = (2\ 2\ 2\ 2\ 2\ 3))$.
Note that in this example file size $M$ and the number of nodes $n$ are same and this is not always true.
\label{example}
\end{example}

\begin{table}[ht]
\caption{Distribution of replicated packets on nodes for the FR code $\mathscr{C}:(n=5,\theta=6,\vec{\alpha} = (4\ 3\ 2\ 2\ 2),\vec{\rho} = (2\ 2\ 2\ 2\ 2\ 3))$.}
\centering 
\begin{tabular}{|c||c|}
\hline
$U_1$& $P_1$, $P_2$, $P_3$, $P_4$         \\
\hline \hline
$U_2$& $P_1$, $P_5$, $P_6$                \\
\hline \hline
$U_3$& $P_2$, $P_5$                       \\
\hline \hline 
$U_4$& $P_3$, $P_6$                       \\
\hline \hline 
$U_5$& $P_4$, $P_6$                       \\
\hline  
\end{tabular}
\label{distribution example}
\end{table}

For an FR code, if a data collector connects any $k$ nodes and downloads $M$ distinct packets then the file can be retrieved by using the MDS property.
The parameter $k$ is called the reconstruction degree of the FR code.
Note that the reconstruction degree of the FR code (as given in Example \ref{example}) is $k=3$ for $M=5$.
In an FR code $\mathscr{C}:(n,\theta,\vec{\alpha},\vec{\rho})$, a failed node $U_i$ can be repaired by replacing a new node $U_i'$, where both the nodes $U_i$ and $U_i'$ carry the same information. 
For the new node, the information packets are downloaded from some active nodes called helper nodes. 
The repair degree of a failed node $U_i$ is the cardinality of the set of the helper nodes.
For a node $U_i$ in an FR code, more than one such helper node sets exist with different repair degree.
For the node $U_i$, the maximum repair degree is $d_i$ and $d$ = $\max\{d_i:i=1,2,\ldots,n\}$.
In the FR code $\mathscr{C}:(5,6,4,3)$ (Example \ref{example}), if node $U_2$ fails then the new node $U_2'$ can be created by downloading packets from nodes of set $\{U_1,U_3,U_4\}$ or set $\{U_1,U_3,U_5\}$. 
Hence, $d_2=3$ and $d=\max\{d_i:i=1,2,3,4,5\}=4$. 

Consider a set $S\subset\{P_j:j=1,2,\ldots,\theta\}$ and a set $T\subset \{ U_1,U_2,\dots,U_n\}$. An FR code $\mathscr{C}':(n', \theta', \vec{\alpha'}, \vec{\rho'})$ adapts the FR code $\mathscr{C}:(n, \theta, \vec{\alpha}, \vec{\rho})$, if $n=n'+|T|$, $\theta=\theta'+|S|$ and $U_i'=U_i\backslash T$ for each $i\in S$. 
Hence, the FR code $\mathscr{C}$ is adaptive FR code. 
Note that an adaptive FR code can be updated efficiently with higher parameters \cite{7312417}. 

In an FR code $\mathscr{C}:(n,\theta,\vec{\alpha},\vec{\rho})$, if $M(k)$ is the maximum number of distinct packets that guarantee to deliver to any user connected with any $k$ nodes of the FR code then $M(k):= \min\left\lbrace \left| \cup_{i\in I}U_i\right|: |I|=k, I\subset\right.$ $\left.\{1,2,\ldots,n\} \right\rbrace$.
Note that $M\leq M(k)$.    
For the FR code as given in Example \ref{example}, one can find that $M(3)=5$. 

In \cite{rr10}, an FR code with symmetric parameters (uniform replication factor and identical node storage capacity) is called \textit{universally good} if the maximum file size $M(k)\geq k\alpha-\binom{k}{2}$, where $k\alpha-\binom{k}{2}$ is MBR capacity \cite{5550492}.  
Universally good FR codes with symmetric parameters are studied in \cite{8309370}.
An FR code $\mathscr{C}:(n, \theta, \vec{\alpha}, \vec{\rho})$ with $\rho_j=\rho$ ($j=1,2,\ldots,\theta$), $\alpha_1\leq \alpha_2\leq\ldots\leq \alpha_n$ and $|U_i\cap U_j|\leq1$ ($1\leq i<j\leq n$) satisfies the inequality $M(k)\geq\sum_{i=1}^k \alpha_i-\binom{k}{2}$ \cite{6763122}.
It is easy to observe that any FR code $\mathscr{C}:(n,\theta,\vec{\alpha},\vec{\rho})$ with $|U_i\cap U_j|\leq 1$ satisfies $M(k)\geq \sum_{i=1}^k\alpha_i-\binom{k}{2},$ where $\alpha_i\leq\alpha_j$ ($1\leq i<j\leq n$).
Hence, an FR code $\mathscr{C}:(n, \theta, \vec{\alpha}, \vec{\rho})$ with asymmetric parameters is called \textit{universally good}, if  $|U_i\cap U_j|\leq1$ ($1\leq i<j\leq n$).

In \cite{7366761}, the minimum distance of FR code is defined as the minimum number of nodes such that if these nodes fail then the active nodes cannot recover the complete file.
More precisely, the minimum distance of an FR code $\mathscr{C}:(n,\theta,\vec{\alpha},\vec{\rho})$ is $d_{min}=\min\{|S|:S\subset [n]\ s.t.\ M(k)>\cup_{i\in[n]\backslash S}U_i\}$, where $[n]=\{1,2,\ldots,n\}$.

\subsection{Background on Hypergraphs}

For a finite set $V=\{v_1,v_2,\ldots,v_n\}$, the pair $(V,\mathcal{E})$ is called a hypergraph with hypervertex set $V$ and hyperedge set $\mathcal{E}$, where $\mathcal{E}\subseteq\{E: E\subseteq V\}$ is a family of some subset of $V$ \cite{Berge89a}.
For a hypergraph  $(V,\mathcal{E})$, $|E|$ is called the size of the hyperedge $E\in\mathcal{E}$ and $|\mathcal{E}(v)|$ is the degree of the hypervertex $v\in V$, where $\mathcal{E}(v)$ = $\{E:\ v\in E\}$. 
An example of such hypergraph is shown in Figure \ref{hypergraph example}. 
In this hypergraph $  (V,\mathcal{E})$, hypervertices are shown by simply bold dots and the hyperedges are represented by the covered area which contains some hypervertices. 
In the particular example, there are $7$ hypervertices and $4$ hyperedges such that $\mathcal{E}(v_1)=\{E_1\}$, $\mathcal{E}(v_2)=\phi$, $\mathcal{E}(v_3)=\{E_3\}$, $\mathcal{E}(v_4)=\{E_1,E_2\}$, $\mathcal{E}(v_5)=\{E_1,E_2,E_3\}$, $\mathcal{E}(v_6)=\{E_3\}$ and $\mathcal{E}(v_7)=\{E_4\}$.

\tikzstyle{vertex} = [fill,shape=circle,node distance=60pt]
\tikzstyle{edge} = [fill,opacity=.5,fill opacity=.3,line cap=round, line join=round, line width=30pt]
\pgfdeclarelayer{background}
\pgfsetlayers{background,main}
\begin{figure}[ht]
    \centering
    \begin{tikzpicture}
    \node[vertex,label=above:\(v_1\)] (v1) {};
    \node[vertex,right of=v1,label=above:\(v_4\)] (v2) {};
    \node[vertex,right of=v2,label=above:\(v_5\)] (v3) {};
    \node[vertex,below of=v1,label=above:\(v_7\)] (v4) {};
    \node[vertex,right of=v4,label=above:\(v_6\)] (v5) {};
    \node[vertex,label=above:\(v_3\)] (v6) at (2.8,-1.3) {};
    \node[vertex,label=above:\(v_2\)] (v7) at (1.2,-1.3) {};
    
    \begin{pgfonlayer}{background}
    \draw[edge,color=green,line width=35pt] (v1) -- (v2) -- (v3);
    \begin{scope}[transparency group,opacity=.5]
    \draw[edge,opacity=1,color=purple] (v3) -- (v5) -- (v3);
    \fill[edge,opacity=1,color=purple] (v3.center) -- (v5.center) -- (v3.center);
    \end{scope}
    \draw[edge,color=yellow,line width=25pt] (v2) -- (v3);
    \draw[edge,color=red,line width=40pt] (v4) -- (v4);
    \end{pgfonlayer}
    
    \node[label=right:\(E_1\)]  (e1) at (0.6,0) {};
    \node[label=right:\(E_2\)]  (e1) at (2.5,0) {};
    \node[label=right:\(E_3\)]  (e1) at (3,-1) {};
    \node[label=right:\(E_4\)]  (e1) at (0,-1.7) {};
    
    \end{tikzpicture}
    \caption{A hypergraph $  (V,\mathcal{E})$ is shown, where $V=\{v_i:i\in[7]\}$ and $\mathcal{E}=\{E_1=\{v_1,v_4,v_5\},E_2=\{v_4,v_5\},E_3=\{v_3,v_5,v_6\},E_4=\{v_7\}\}$.}
    \label{hypergraph example}
\end{figure}
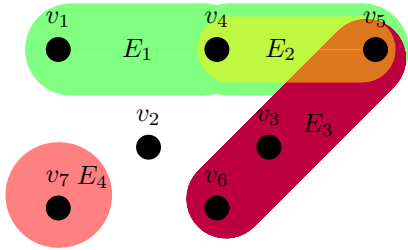

A hypergraph  $(V',\mathcal{E}')$ is called a sub-hypergraph of a hypergraph  $(V,\mathcal{E})$ if $V'\subseteq V$ and $\mathcal{E}'\subseteq\{E'\neq\phi:E'\subseteq E\cap V', E\in\mathcal{E}\}$. 
For an example, a hypergraph  $(V',\mathcal{E}')$ with $V'=\{v_1,v_3,v_4,v_5,v_6\}$ and $\mathcal{E}'=\{E'_1=\{v_1,v_4,v_5\},E'_2=\{v_4\},E'_3=\{v_3,v_5\}\}$, is a sub-hypergraph of the hypergraph  $(V,\mathcal{E})$ (as shown in Figure \ref{hypergraph example}).
For a hypergraph  $(V,\mathcal{E})$, hypergraph  $(V',\mathcal{E}')$ is called sub-hypergraph induced by the set $V'\subset V$ if $\mathcal{E}'=\{E'\neq\phi:E'= E\cap V', E\in\mathcal{E}\}$.
A hypergraph is called connected hypergraph if there exists a sub-hypergraph with the same hypervertex set such that the sub-hypergraph is isomorphic to a connected graph. The hypergraph  $(V,\mathcal{E})$ (Figure \ref{hypergraph example}) is not a connected hypergraph since there is an isolated hypervertex.

Consider a hypergraph  $(V,\mathcal{E})$ without isolated hypervertices. 
For a finite set $V^*=\{v^*_1,v^*_2,\ldots,v^*_{|\mathcal{E}|}\}$, a hypergraph   $(V^*,\mathcal{E}^*)$ is called the dual of the hypergraph  $(V,\mathcal{E})$ if their exist a bijection $g:\mathcal{E}\rightarrow V^*$ such that $E_j=\{g(E_i)=v^*_i:v_j\in E_i\}$, where $j=1,2,\ldots,|V|$. 
For an example, a hypergraph $  (V^*,\mathcal{E}^*)$ with $V^*=\{v_1^*,v_2^*,v_3^*,v_4^*\}$ and $\mathcal{E}^*=\{E_1^*=\{v_1^*\}, E_2^*=\{v_3^*\}, E_3^*=\{v_1^*,v_2^*\}, E_4^*=\{v_1^*,v_2^*,v_3^*\},E_5^*=\{v_4^*\}\}$, is the dual of a hypergraph $  (V,\mathcal{E})$ with $V=\{v_1,v_2,v_3,v_4,v_5\}$ and $\mathcal{E}=\{E_1=\{v_1,v_3,v_4\}$, $E_2 = \{v_3,v_4\},E_3=\{v_2,v_4\},E_4=\{v_5\}\}$, where $g(E_j)=v^*_j$ and $i=1,2,\ldots,5$.

In a hypergraph, if two hyperedges share one node maximum then the hypergraph is called linear hypergraph. 
For a linear hypergraph  $(V,\mathcal{E})$, 
    \begin{equation}
        |\mathcal{E}|\geq \sum_{v\in V}|\mathcal{E}(v)|-\binom{|v|}{2}.
        \label{}
    \end{equation}
A sub-hypergraph of a linear hypergraph is also a linear hypergraph. For a positive integer $k$ (not more than the size of the hypergraph), one can find a bound on the size of sub-hypergraph induced by any $k$-element subset. 
For $V=\{v_i:i=1,2,\ldots,n\}$, consider a linear hypergraph  $(V,\mathcal{E})$ with $|\mathcal{E}(v_i)|\leq|\mathcal{E}(v_j)|$, for $i<j$ and $i,j=1,2,\ldots,n$. For a given positive integer $k<n$, let   $(V'\subset V,\mathcal{E}')$ be a induced sub-hypergraph of the hypergraph  $(V,\mathcal{E})$ with $\mathcal{E}'=\{E'\neq\phi: E'=E\cap V', E\in\mathcal{E}\}$ and $|V'|=k$. Then
    \begin{equation}
        \min\{|\mathcal{E}'|:V'\subset V,|V'|=k\}\geq \sum_{i=1}^k|\mathcal{E}(v_i)|-\binom{k}{2}.
    \end{equation}
\section{Hypergraphs and Fractional Repetition Codes}\label{sec:3}

In this Section, A bijection (called Hyper-FR mapping) between a hypergraph and an FR code is studied. 
Using the Hyper-FR mapping, the relation between the parameters of FR codes and hypergraphs are studied at the end of the Section. 

In an FR code, the distribution of nodes and packets are similar to the incidence of hypervertices and hyperedges in a hypergraph. 
An FR code can be represented by a hypergraph $(V,\mathcal{E})$ with $n$ hypervertices and $\theta$ hyperedges.
The following fact is apparent.
\begin{fact}
An FR code with $n$ nodes, $\theta$ packets, node storage capacity $\alpha_i$ ($i=1,2,\ldots,n$) and packet replication factor $\rho_j$ ($j=1,2,\ldots,\theta$) is equivalent to a hypergraph $(V,\mathcal{E})$ with $|V|=n$ and $|\mathcal{E}|=\theta$ such that $|\mathcal{E}(v_i)|$ = $\alpha_i$ for $v_i\in V$, and $|E_j|$ = $\rho_j$ for $E_j\in\mathcal{E}$.
Hence, $\rho = \max\{|E|:E\in\mathcal{E}\}$ and $\alpha=\max\{|\mathcal{E}(v_i)|:i=1,2,\ldots,n\}$. 
\label{Hypergraph and FR code}
\end{fact}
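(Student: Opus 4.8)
The plan is to make the word ``equivalent'' precise by exhibiting an explicit bijection between FR codes on $n$ nodes and $\theta$ packets and hypergraphs (families of subsets of an $n$-set, repetitions allowed) with $\theta$ hyperedges, and then to read off the parameter identities. First I would fix the ground sets. To an FR code $\mathscr{C}$ with nodes $U_1,\ldots,U_n$ and packets $P_1,\ldots,P_\theta$ I associate the hypervertex set $V=\{v_1,\ldots,v_n\}$ and the hyperedge family $\mathcal{E}=\{E_1,\ldots,E_\theta\}$ with $E_j=\{v_i : P_j\in U_i\}$; that is, $E_j$ records exactly the set of nodes on which packet $P_j$ is replicated. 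Conversely, given a hypergraph $(V,\mathcal{E})$ with $|V|=n$ and $|\mathcal{E}|=\theta$, I set $U_i=\{P_j : v_i\in E_j\}$.

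Next I would verify that these two constructions are mutually inverse. Starting from $\mathscr{C}$, passing to $(V,\mathcal{E})$, and then forming the code $\{U_i'\}$, one has $P_j\in U_i' \iff v_i\in E_j \iff P_j\in U_i$, so $U_i'=U_i$ for every $i$; the reverse composition is symmetric. This is the asserted equivalence.

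It then remains to match the parameters. By definition $\mathcal{E}(v_i)=\{E_j : v_i\in E_j\}$, and by construction $v_i\in E_j \iff P_j\in U_i$, so $|\mathcal{E}(v_i)|$ counts the packets stored on $U_i$, which is $\alpha_i$; dually $|E_j|$ counts the nodes holding $P_j$, which is $\rho_j$. The two ``Hence'' equalities are then immediate from the definitions in Section \ref{sec:2}: since $\alpha=\max_i\alpha_i$ and $\rho=\max_j\rho_j$, substituting $\alpha_i=|\mathcal{E}(v_i)|$ and $\rho_j=|E_j|$ gives $\alpha=\max\{|\mathcal{E}(v_i)| : i=1,\ldots,n\}$ and $\rho=\max\{|E| : E\in\mathcal{E}\}$.

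The argument is a bookkeeping translation and I expect no real obstacle; the only point needing a little care is to treat $\mathcal{E}$ as an indexed family rather than a plain set, since two distinct packets may be replicated on exactly the same set of nodes and would otherwise be identified as a single hyperedge, breaking bijectivity. Keeping hyperedges labelled (matching the packet index) preserves the one-to-one correspondence.
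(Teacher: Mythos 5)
Your proposal is correct and follows essentially the same route the paper intends: the paper states this as an ``apparent'' fact and merely illustrates the incidence correspondence $E_j=\{v_i: P_j\in U_i\}$ by example, which is exactly the bijection you construct and verify. Your closing caveat --- that $\mathcal{E}$ must be treated as an indexed family rather than a plain set, since the paper's definition $\mathcal{E}\subseteq\{E:E\subseteq V\}$ would collapse two packets replicated on identical node sets into one hyperedge --- is a genuine point of care that the paper glosses over, and is worth keeping.
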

\begin{example}
As shown in Figure \ref{example figure}, consider a hypergraph  $(V,\mathcal{E})$ with $V=\{v_i:i=1,2,3,4\}$ and $\mathcal{E}$ = $\{E_j:j=1,2,3,4,5,6\}$, where $E_1$ = $\{v_1,v_2\}$, $E_2$ = $\{v_1,v_3\}$, $E_3$ = $\{v_1,v_4\}$, $E_4$ = $\{v_2,v_3\}$, $E_5$ = $\{v_2,v_4\}$ and $E_6$ = $\{v_3,v_4\}$. 
For the hypergraph, $\mathcal{E}(v_1)$ = $\{E_1,E_2,E_3\}$, $\mathcal{E}(v_2)$ = $\{E_1,E_4,E_5\}$, $\mathcal{E}(v_3)$ = $\{E_2,E_4,E_6\}$ and $\mathcal{E}(v_4)$ = $\{E_3,E_5,E_6\}$. 
As given in Figure \ref{example figure}, consider an FR code $\mathscr{C}:(4,6,(3\ 3\ 3\ 3),(2\ 2\ 2\ 2\ 2\ 2))$ with $4$ nodes and $6$ such that $U_1$ = $\{P_1,P_2,P_3\}$, $U_2$ = $\{P_1,P_4,P_5\}$, $U_3$ = $\{P_2,P_4,P_6\}$ and $U_4$ = $\{P_3,P_5,P_6\}$. Note that there exist the bijection $\varphi:\mathcal{E}\to\{P_j:j=1,2,\ldots,6\}$ such that $\varphi(E_j)=P_j$ for each $j=1,2,\ldots,|\mathcal{E}|$. Hence, $\varphi(\mathcal{E}(v_i))$ = $\{P_j=\varphi(E_j):E_j\in\mathcal{E}, v_i\in E_j\}$.
\label{Hyper-FR mapping example}
\end{example}


          \tikzstyle{line} = [draw, -]
          \tikzstyle{cloud} = [draw, circle,fill=red!20, node distance=2cm, minimum height=2em]
            \tikzstyle{block 1} = [rectangle, draw, fill=red!20, text width=6em, text centered, rounded corners, minimum height=1em, node distance=1.7cm]
          \tikzstyle{block 2} = [rectangle, draw, fill=red!20, text width=6em, text centered, rounded corners, minimum height=1em, node distance=0.7cm]
          \tikzstyle{block 3} = [rectangle, draw, fill=red!20, text width=6em, text centered, rounded corners, minimum height=1em, node distance=0.7cm]
           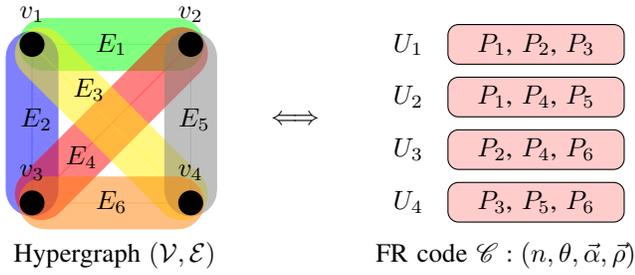
\begin{figure}
           \centering
          \begin{tikzpicture}[node distance = 2cm, auto]
          
         \node[vertex,label=above:\(v_1\)] (v1) {};
         \node[vertex,right of=v1,label=above:\(v_2\)] (v2) {};
         \node[vertex,below of=v1,label=above:\(v_3\)] (v3) {};
        \node[vertex,right of=v4,label=above:\(v_4\)] (v4) {};
            
    \begin{pgfonlayer}{background}
    \draw[edge,color=green,line width=20pt] (v1) -- (v2);
    \draw[edge,color=blue,line width=20pt] (v1) -- (v3);
    \draw[edge,color=red,line width=20pt] (v2) -- (v3);
    \draw[edge,color=yellow,line width=20pt] (v1) -- (v4);
    \draw[edge,color=gray,line width=20pt] (v2) -- (v4);
    \draw[edge,color=orange,line width=20pt] (v3) -- (v4);
    \end{pgfonlayer}
    
    \node[label=right:\(E_1\)]  (e1) at (0.6,0) {};
    \node[label=right:\(E_2\)]  (e1) at (-0.4,-1) {};
    \node[label=right:\(E_3\)]  (e1) at (0.3,-0.6) {};
    \node[label=right:\(E_4\)]  (e1) at (0.2,-1.5) {};
    \node[label=right:\(E_5\)]  (e1) at (1.7,-1) {};
    \node[label=right:\(E_6\)]  (e1) at (0.6,-2.1) {};
    
          \draw (3.5,-1) node {$\Longleftrightarrow$};
         \node (u1) at (5,0) {$U_1$};
          \node [block 1, right of=u1] (u3) {$P_1$, $P_2$, $P_3$};
          \node [block 2, below of=u3] (u4) {$P_1$, $P_4$, $P_5$};
          \node [block 2, below of=u4] (u5) {$P_2$, $P_4$, $P_6$};
          \node [block 2, below of=u5] (u6) {$P_3$, $P_5$, $P_6$};
          \draw (5,-0.7) node {$U_2$};
          \draw (5,-1.4) node {$U_3$};
          \draw (5,-2.1) node {$U_4$};
          \draw (1.1,-2.8) node {Hypergraph $(\mathcal{V},\mathcal{E})$};
          \draw (6.3,-2.8) node {FR code $\mathscr{C}:(n,\theta,\vec{\alpha},\vec{\rho})$};
          \end{tikzpicture}
          \caption{An example of a Hypergraph $(\mathcal{V},\mathcal{E})$ which is corresponding to an FR code $\mathscr{C}:(n=4,\theta=6,\alpha=3,\rho=2)$, 
          where $\mathcal{V}$=$\{v_1,v_2,v_3,v_4\}$ and $\mathcal{E}$ = $\left\lbrace \{v_1,v_2\}, \{v_1,v_3\},\{v_1,v_4\},\{v_2,v_3\},\{v_2,v_4\},\{v_3,v_4\}\right\rbrace $.}
          \label{example figure}
          \end{figure}
For a hypergraph $(V,\mathcal{E})$, let $(V',\mathcal{E}')$ be a sub-hypergraph induced by $V'\subset V$.
The minimum distance of the FR code is $d_{min}$ = $\min\{|V\backslash V'|:V'\subset V\ s.t.\ |\mathcal{E}'|<\theta\}$. 
For the given positive integer $k<n$, the maximum file size stored in the FR code is $M(k)=\min\{|\mathcal{E}'|:V'\subset V\ s.t.\ |V'|=k\}$.
\begin{remark}
For a hypergraph  $(V\neq\phi,\mathcal{E})$, an FR code $\mathscr{C}:(n,\theta, \alpha, \rho)$ exists if and only if $|\mathcal{E}(v)|>0$ ($\ \forall\ v\in V$) and $|E|>0$ ($\ \forall\ E\in\mathcal{E}$).
\end{remark}

The necessary and sufficient condition for the existence of an FR code is given in the following remark.
\begin{remark}
Consider two vectors ($\rho_1\ \rho_2\ldots\rho_\theta$) and $(\alpha_1$ $\alpha_2\ldots\alpha_n)$ on integers such that $\alpha=\alpha_1\geq$ $\alpha_2\geq$ $\ldots$ $\geq\alpha_n$ for $j=1,2,\ldots,\theta$ and $i=1,2,\ldots,n$. 
An FR code $\mathscr{C}:(n,\theta,\vec{\alpha},\vec{\rho})$ exists with $|U_i|$ = $\alpha_i$ for $i=1,2,\ldots, n$ and replication factor of packet $P_j$ is $\rho_j$ for $j=1,2,\ldots,\theta$ if
    \begin{enumerate}
        \item\label{constraint 1} $\sum\limits_{j=1}^{\theta}\rho_j=\sum\limits_{i=1}^n\alpha_i$ and
        \item\label{constraint 2} $\sum\limits_{j=1}^{\theta}\min\{\rho_j,m\}\geq\sum\limits_{i=1}^m\alpha_i\ (\mbox{ for } m<n, m\in\mathbb{Z})$.
    \end{enumerate}
\end{remark}

Note that if the parameters of an FR code do not satisfy one of the two constraints, then the FR code does not exist.

\begin{fact}
An FR code is universally good if and only if the FR code is a linear hypergraph.
\end{fact}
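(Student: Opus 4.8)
The plan is to unwind both sides of the equivalence through the Hyper--FR bijection of Fact~\ref{Hypergraph and FR code} and to reduce the claim to a simple symmetry of the incidence relation between hypervertices and hyperedges. Throughout, I identify the FR code $\mathscr{C}:(n,\theta,\vec{\alpha},\vec{\rho})$ with the hypergraph $(V,\mathcal{E})$ in which $U_i$ corresponds to the hypervertex $v_i$, $P_j$ corresponds to the hyperedge $E_j$, and $P_j\in U_i$ holds exactly when $v_i\in E_j$; in particular, as in Example~\ref{Hyper-FR mapping example}, $U_i=\varphi(\mathcal{E}(v_i))$, and we work under the standing assumption (recorded in the preceding remarks) that the hypergraph has no isolated hypervertex and no empty hyperedge, so that this bijection is genuinely available.

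First I would translate the notion of being ``universally good''. By the definition adopted in Section~\ref{sec:2}, $\mathscr{C}$ is universally good iff $|U_i\cap U_j|\le 1$ for all $1\le i<j\le n$. Under the identification above, a packet $P$ lies in $U_i\cap U_j$ iff the hyperedge $E=\varphi^{-1}(P)$ satisfies $v_i\in E$ and $v_j\in E$; hence $\varphi$ restricts to a bijection between $U_i\cap U_j$ and $\mathcal{E}(v_i)\cap\mathcal{E}(v_j)$, so that $|U_i\cap U_j|=|\mathcal{E}(v_i)\cap\mathcal{E}(v_j)|$. Therefore $\mathscr{C}$ is universally good iff no two distinct hypervertices of $(V,\mathcal{E})$ lie in two or more common hyperedges.

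It then remains to establish the purely combinatorial equivalence: a hypergraph $(V,\mathcal{E})$ satisfies $|\mathcal{E}(v)\cap\mathcal{E}(w)|\le 1$ for all $v\ne w$ if and only if $|E\cap E'|\le 1$ for all distinct $E,E'\in\mathcal{E}$, i.e.\ it is a linear hypergraph. The key observation is that each side is the negation of the same configuration: the existence of distinct hypervertices $v,w$ and distinct hyperedges $E,E'$ with $v,w\in E$ and $v,w\in E'$ (equivalently, a $2\times 2$ all-ones submatrix of the incidence matrix of the hypergraph). Indeed, a pair $\{v,w\}$ of hypervertices lying in two common hyperedges $E,E'$ is exactly a pair of hyperedges $E,E'$ meeting in the two hypervertices $v,w$, and conversely. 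Chaining this with the previous paragraph gives the statement.

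The argument is essentially bookkeeping, so I do not expect a genuine obstacle; the one point that needs care is keeping the two ``sidedness'' conventions straight, since universal goodness constrains pairs of \emph{nodes} (hypervertices), whereas linearity is phrased for pairs of \emph{packets} (hyperedges). Making explicit that these coincide because of the left--right symmetry of the incidence relation --- rather than invoking it tacitly --- is the step I would be most careful to write out in full.
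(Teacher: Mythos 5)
Your proof is correct, and it supplies exactly the justification the paper leaves implicit (the statement appears as a Fact with no proof). Translating $|U_i\cap U_j|\le 1$ into $|\mathcal{E}(v_i)\cap\mathcal{E}(v_j)|\le 1$ via the Hyper--FR bijection and then noting that ``two hypervertices in two common hyperedges'' and ``two hyperedges sharing two hypervertices'' are the same forbidden $2\times 2$ incidence configuration is precisely the intended argument, and your care in making the vertex/edge symmetry explicit is warranted since that is the only nontrivial step.
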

\begin{construction}
For a non-empty hypervertex set $V$ and set $\mathcal{E}\subset\{E:E\subset V\}$, a linear hypergraph $(V,\mathcal{E})$ is a universally good FR code with $|V|$ nodes, $|\mathcal{E}|$ packets, node storage capacity $|\mathcal{E}(v_i)|$ ($v_i\in V$ for $i=1,2,\ldots,|V|$) and packet replication factor $|E_j|$ ($E_j\in\mathcal{E}$ for $j=1,2,\ldots,\mathcal{E}$).
\end{construction}
Recall that any sub-hypergraph (or induced sub-hypergraph) of a linear hypergraph is also a linear hypergraph. 
It motives us for recursive construction of a linear hypergraph by adding a hypervertex and/or adding a hyperedge into a linear hypergraph as per the following three facts. 
\begin{fact}
Consider a linear hypergraph ($V,\mathcal{E}$).
For a set $E\subset V$, the hypergraph ($V,\mathcal{E}\cup\{E\}$) is a linear hypergraph if and only if $|E\cap E'|\leq1$ for each $E'\in\mathcal{E}$.
\label{fact hyperedge}
\end{fact}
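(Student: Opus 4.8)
The statement to prove is Fact~\ref{fact hyperedge}: given a linear hypergraph $(V,\mathcal{E})$ and a set $E\subset V$, the hypergraph $(V,\mathcal{E}\cup\{E\})$ is linear if and only if $|E\cap E'|\leq1$ for every $E'\in\mathcal{E}$. The plan is to unwind the definition of ``linear hypergraph'' directly, namely that every pair of distinct hyperedges meets in at most one hypervertex, and to check that the only new pairs introduced by appending $E$ are precisely the pairs $\{E,E'\}$ with $E'\in\mathcal{E}$.

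First I would record the bookkeeping: the edge set of the new hypergraph is $\mathcal{E}\cup\{E\}$, so its set of unordered pairs of distinct hyperedges is the disjoint union of (i) the pairs $\{E',E''\}$ with $E',E''\in\mathcal{E}$ distinct, and (ii) the pairs $\{E,E'\}$ with $E'\in\mathcal{E}$ (assuming $E\notin\mathcal{E}$; the degenerate case $E\in\mathcal{E}$ adds nothing new and the equivalence holds trivially, or one simply treats $\mathcal{E}\cup\{E\}$ as a set). For direction ($\Leftarrow$): if $|E\cap E'|\leq1$ for all $E'\in\mathcal{E}$, then the pairs of type (ii) all satisfy the linearity condition, and the pairs of type (i) do so because $(V,\mathcal{E})$ is already linear by hypothesis; hence every pair of distinct hyperedges of $(V,\mathcal{E}\cup\{E\})$ meets in at most one vertex, so it is linear. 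For direction ($\Rightarrow$): if $(V,\mathcal{E}\cup\{E\})$ is linear, then in particular every pair of type (ii) satisfies $|E\cap E'|\leq1$, which is exactly the claimed condition.

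The argument is essentially immediate once the definition is spelled out, so there is no real obstacle; the only point requiring a word of care is the boundary case where $E$ already belongs to $\mathcal{E}$ or where $E$ coincides with some $E'$ up to the set-union collapsing duplicates, and I would dispose of that in one sentence. If one prefers to phrase things via the inequality~(1) relating $|\mathcal{E}|$ to $\sum_{v}|\mathcal{E}(v)|-\binom{|V|}{2}$, that is not needed here — the pairwise-intersection characterization of linearity (stated in the excerpt as ``if two hyperedges share one node maximum then the hypergraph is called linear hypergraph'') is the cleanest route and makes both implications one line each.
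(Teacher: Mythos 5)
Your argument is correct and is exactly the intended justification: the paper states Fact~\ref{fact hyperedge} without any proof, treating it as immediate from the pairwise-intersection definition of a linear hypergraph, and your unwinding (the only new unordered pairs of hyperedges are $\{E,E'\}$ with $E'\in\mathcal{E}$, while the old pairs are covered by the linearity of $(V,\mathcal{E})$) is the whole content of the claim. Your aside on the degenerate case $E\in\mathcal{E}$ is worth flagging, with the one caveat that for the ``only if'' direction the condition must be read as ranging over $E'\neq E$, since linearity constrains only distinct hyperedges and otherwise $E'=E$ with $|E|\geq 2$ would falsify the stated inequality even though the hypergraph is unchanged.
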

\begin{fact}
Consider a linear hypergraph ($V,\mathcal{E}$).
For a hypervertex $v\notin V$ and a set $E\subset V$, the hypergraph ($V\cup\{v\},\mathcal{E}\cup\{E\cup\{v\}\}$) is a linear hypergraph if and only if $|E\cap E'|\leq1$ for each $E'\in\mathcal{E}$.
\label{fact hypervertax}
\end{fact}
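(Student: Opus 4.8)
The plan is to prove both implications directly from the definition of a linear hypergraph, leaning on the single structural observation that the fresh hypervertex $v$ belongs to no old hyperedge. First I would set $\widehat{\mathcal{E}}=\mathcal{E}\cup\{E\cup\{v\}\}$ and record that, since every $E'\in\mathcal{E}$ satisfies $E'\subseteq V$ while $v\notin V$, we have $v\notin E'$; hence $E\cup\{v\}\notin\mathcal{E}$ (so $\widehat{\mathcal{E}}$ really does have one more hyperedge than $\mathcal{E}$), and for every $E'\in\mathcal{E}$,
\[
  (E\cup\{v\})\cap E' \;=\; (E\cap E')\cup(\{v\}\cap E') \;=\; E\cap E'.
\]
This identity is the whole engine of the argument.

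For the forward direction I would assume $(V\cup\{v\},\widehat{\mathcal{E}})$ is linear and fix any $E'\in\mathcal{E}$. Since $E\cup\{v\}$ and $E'$ are distinct hyperedges of $\widehat{\mathcal{E}}$ (they are separated by $v$), linearity forces $|(E\cup\{v\})\cap E'|\le 1$, which by the displayed identity is precisely $|E\cap E'|\le 1$.

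For the converse I would assume $|E\cap E'|\le 1$ for every $E'\in\mathcal{E}$ and take two distinct hyperedges $F_1,F_2\in\widehat{\mathcal{E}}$. If both lie in $\mathcal{E}$ then $|F_1\cap F_2|\le 1$ since $(V,\mathcal{E})$ is already linear; otherwise one of them, say $F_1=E\cup\{v\}$, is the new hyperedge and $F_2=E'\in\mathcal{E}$, and the displayed identity together with the hypothesis gives $|F_1\cap F_2|=|E\cap E'|\le 1$. In all cases any two hyperedges meet in at most one hypervertex, so $(V\cup\{v\},\widehat{\mathcal{E}})$ is linear.

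I do not expect a genuine obstacle; the only point that needs care is the bookkeeping that $v$ lies in none of the old hyperedges, which is exactly what collapses an intersection against $E\cup\{v\}$ to an intersection against $E$ and makes this a cosmetic variant of Fact~\ref{fact hyperedge}. Alternatively, one could derive it from Fact~\ref{fact hyperedge}: first note that adjoining the isolated hypervertex $v$ to the linear hypergraph $(V,\mathcal{E})$ leaves it linear, then apply Fact~\ref{fact hyperedge} to $(V\cup\{v\},\mathcal{E})$ with the new hyperedge $E\cup\{v\}$, observing that $|(E\cup\{v\})\cap E'|\le 1$ holds for $E'\in\mathcal{E}$ if and only if $|E\cap E'|\le 1$.
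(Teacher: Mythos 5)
Your proof is correct and is the natural direct verification; the paper itself states this as a Fact with no proof at all, so your argument simply supplies the details the authors treat as apparent. The key point you isolate --- that $v\notin V$ forces $(E\cup\{v\})\cap E'=E\cap E'$ for every old hyperedge $E'$, reducing the claim to Fact~\ref{fact hyperedge} --- is exactly the right observation and both of your routes (direct case check, or adjoining $v$ as an isolated hypervertex and then invoking Fact~\ref{fact hyperedge}) are sound.
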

\begin{fact}
Consider a linear hypergraph ($V,\mathcal{E}$).
For a set $E\subset V$ and a hyperedge $E'\in\mathcal{E}$, let $E'\subset E$.
The hypergraph ($V,(\mathcal{E}\backslash\{E'\})\cup\{E\}$) is a linear hypergraph if and only if $|E\cap E''|\leq1$ for each $E''\in\mathcal{E}$, where $E''\neq E'$.
\label{fact induced sub-hypergraph}
\end{fact}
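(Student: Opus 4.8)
The plan is to establish the biconditional directly by unwinding the definition of a linear hypergraph, namely that two distinct hyperedges meet in at most one hypervertex. Write $\mathcal{E}_{\mathrm{new}} = (\mathcal{E}\backslash\{E'\})\cup\{E\}$ for the edge set of the modified hypergraph. The hypervertex set is unchanged, so the only question is which pairs of hyperedges in $\mathcal{E}_{\mathrm{new}}$ could violate linearity.

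First I would split the pairs of distinct hyperedges in $\mathcal{E}_{\mathrm{new}}$ into two types: (i) pairs $\{A,B\}$ with $A,B\in\mathcal{E}\backslash\{E'\}$, and (ii) pairs $\{E,B\}$ with $B\in\mathcal{E}\backslash\{E'\}$. For type (i), since $(V,\mathcal{E})$ is linear and $A,B$ are already distinct hyperedges of $\mathcal{E}$, we have $|A\cap B|\leq 1$ automatically, so these pairs never cause a problem. Hence linearity of $(V,\mathcal{E}_{\mathrm{new}})$ is equivalent to the condition that $|E\cap B|\leq 1$ for every $B\in\mathcal{E}\backslash\{E'\}$, which is exactly the stated condition $|E\cap E''|\leq 1$ for each $E''\in\mathcal{E}$ with $E''\neq E'$. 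This gives both directions at once: if the new hypergraph is linear then in particular every type-(ii) pair satisfies the intersection bound; conversely, if every type-(ii) pair satisfies it, then together with the free type-(i) pairs all pairs do, so the new hypergraph is linear.

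The only subtlety to check is that $E$ is genuinely a single hyperedge distinct from each $B\in\mathcal{E}\backslash\{E'\}$, so that the condition $|E\cap B|\le 1$ is the right thing to demand; this is where the hypothesis $E'\subset E$ (and $E'\in\mathcal{E}$) matters, since it guarantees $E\neq\phi$ and that we are truly replacing the old hyperedge $E'$ by a strictly larger one rather than creating a duplicate or an empty edge — so the resulting pair is a valid hypergraph in the sense of the paper's definition. I expect the main (and essentially only) obstacle is bookkeeping: being careful that no pair of hyperedges is overlooked in the case split, in particular remembering that $E'$ itself has been removed so pairs involving $E'$ need not be considered, and that the pair $\{E,E'\}$ does not appear in $\mathcal{E}_{\mathrm{new}}$. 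Once the case split is laid out cleanly, the argument is immediate and parallels the proofs of Facts \ref{fact hyperedge} and \ref{fact hypervertax}.
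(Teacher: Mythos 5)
Your proof is correct and is exactly the direct verification the paper implicitly relies on (this Fact is stated without proof): pairs of old hyperedges are covered by linearity of $(V,\mathcal{E})$, and the pairs $\{E,E''\}$ involving the new hyperedge are governed precisely by the stated intersection condition, giving both directions at once. One small inaccuracy in your ``only subtlety'': the hypothesis $E'\subset E$ does not by itself prevent $E$ from coinciding with some other hyperedge $E''\in\mathcal{E}\setminus\{E'\}$, and in that degenerate case the edge set collapses and the ``only if'' direction technically fails (e.g.\ $\mathcal{E}=\{\{v_1\},\{v_1,v_2\}\}$, $E'=\{v_1\}$, $E=\{v_1,v_2\}$ yields a linear hypergraph even though $|E\cap E''|=2$) --- but this is an edge case the paper itself glosses over and it does not affect the substance of your argument.
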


For example, one can construct the FR code (Figure \ref{example figure}) recursively as given in Table \ref{recursive FR code}. 
In the example, a hypervertex $v_4$ and a hyperedge $E_3$ are added into the initial hypergraph for the step $2$ using the Fact \ref{fact hypervertax}.
In the step $3$ and $4$, hyperedges $E_5$ and $E_6$ are added into the hypergraph.
\begin{remark}
	In Fact \ref{fact hyperedge}, a linear hypergraph is constructed by adding a hyperedge into a linear hypergraph. 
	In corresponding FR codes, a universally good FR code is constructed by adding a new packet into a universally good FR code.  
\end{remark}
\begin{remark}
In Fact \ref{fact hypervertax}, a linear hypergraph is constructed by adding a hypervertex into a linear hypergraph. 
In corresponding FR codes, a universally good FR code is constructed by adding a new node into a universally good FR code.  
\end{remark}
\begin{remark}
	In Fact \ref{fact induced sub-hypergraph}, a linear hypergraph is constructed by replacing a hyperedge with a new hyperedge such that the new hyperedge contains all the hypervertices of the hyperedge. 
	In corresponding FR codes, a universally good FR code is constructed by increasing the replication factor of a packet in a universally good FR code.  
\end{remark}

For an FR code, the dual FR code is equivalent to the dual hypergraph. 
For FR codes with symmetric parameters, the file size of dual FR code is studied in \cite{8277971}.
Therefore, the following remark is evident.
\begin{remark}
    The dual of a linear hypergraph is again a linear hypergraph. 
    So, the dual of a universally good FR code is again a universally good.
\end{remark}
\subsection{Bounds on Fractional Repetition Codes}
In the subsection, bounds on parameters of various FR codes are calculated in Lemma \ref{lemma 30}, \ref{lemma 32} and \ref{lemma 18} using properties of hypergraphs.
\begin{lemma}
    For each $j=1,2,\ldots,\theta$, an FR code $\mathscr{C}:(n,\theta,\vec{\alpha},\vec{\rho})$ with $\rho_j=\rho$ ($j=1,2,\ldots,\theta$) connected conditional hypergraph, satisfies the following conditions.
    \begin{enumerate}
        \item $\sum\limits_{i=1}^n\alpha_i \mbox{ is multiple of }\rho$,
        \item $\sum\limits_{i=1}^n\alpha_i\geq\frac{\rho(n-1)}{\rho -1}$,
        \item $\alpha\leq\frac{1}{\rho}\sum\limits_{i=1}^n\alpha_i$.
    \end{enumerate}
    \label{lemma 30}
\end{lemma}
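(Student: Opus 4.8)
The plan is to translate everything into hypergraph language via Fact \ref{Hypergraph and FR code}: the FR code corresponds to a connected hypergraph $(V,\mathcal{E})$ with $|V|=n$, $|\mathcal{E}|=\theta$, $|\mathcal{E}(v_i)|=\alpha_i$ for $v_i\in V$, and $|E|=\rho$ for every $E\in\mathcal{E}$. Two of the three claims then drop out of a single double count of the incidences between hypervertices and hyperedges. Counting the pairs $(v,E)$ with $v\in E$ first by vertices and then by edges gives
$\sum_{i=1}^{n}\alpha_i=\sum_{v\in V}|\mathcal{E}(v)|=\sum_{E\in\mathcal{E}}|E|=\theta\rho$.
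This is $\theta$ copies of $\rho$, which is claim (1); and since $\alpha=\max_i|\mathcal{E}(v_i)|\le|\mathcal{E}|=\theta=\frac1\rho\sum_{i=1}^{n}\alpha_i$, we also get claim (3). Neither of these uses connectivity.

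The substantive step is claim (2), where I would use connectivity; here I assume $\rho\ge 2$ (for $\rho=1$ a connected hypergraph forces $n=1$ and the stated quantity $\frac{\rho(n-1)}{\rho-1}$ is degenerate). Since $(V,\mathcal{E})$ is connected, by the definition quoted earlier it has a spanning sub-hypergraph isomorphic to a connected graph on the $n$ vertices; extracting a spanning tree $T$ of that graph gives $n-1$ edges, each of which (being an edge of a sub-hypergraph with vertex set $V$) is contained in some hyperedge of $\mathcal{E}$. Now fix $E\in\mathcal{E}$ and look at the edges of $T$ having both endpoints in $E$: as a subgraph of the tree $T$ it is a forest, and it lives on at most $|E|=\rho$ vertices, so it has at most $\rho-1$ edges. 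Every edge of $T$ is counted at least once this way over all $E\in\mathcal{E}$, hence $n-1\le\sum_{E\in\mathcal{E}}(\rho-1)=\theta(\rho-1)$, i.e. $\theta\ge\frac{n-1}{\rho-1}$. Combining with $\sum_{i=1}^{n}\alpha_i=\theta\rho$ yields $\sum_{i=1}^{n}\alpha_i\ge\frac{\rho(n-1)}{\rho-1}$, which is claim (2).

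I expect the only real obstacle to be making this connectivity step fully rigorous from the paper's conventions: justifying that a connected hypergraph admits a spanning tree all of whose edges sit inside hyperedges (this is essentially unwinding the definitions of sub-hypergraph and connected hypergraph), and checking that the $T$-edges inside a single hyperedge form a forest on at most $\rho$ vertices. Everything else is bookkeeping with the incidence identity. I would close with a remark that equality in (2) forces $T$ to use a full $\rho-1$ of its edges inside each hyperedge, i.e. a tree-like overlap pattern among the packets, and that (1) and (3) hold for any FR code with uniform replication factor, connected or not.
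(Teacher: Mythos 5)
Your proof is correct and follows the same route as the paper: translate the FR code into a connected $\rho$-uniform hypergraph via Fact~\ref{Hypergraph and FR code} and then invoke the standard degree/edge-counting facts for such hypergraphs. The only difference is that the paper discharges those facts by citing Proposition~2 of Berge, whereas you prove them directly (the incidence double count for (1) and (3), and the spanning-tree/forest argument for (2)), which is a faithful and complete reconstruction of the cited result, including the correct handling of the degenerate case $\rho=1$.
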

\begin{proof}
Proof follows the Fact \ref{Hypergraph and FR code} and Proposition 2 \cite[Chapter 1]{Berge89a}.
\end{proof}
\begin{lemma}
Let $\mathscr{C}:(n,\theta=2m,\vec{\alpha},\vec{\rho})$ be an FR code with message packets $P_1$, $P_2$,$\ldots$, $P_m$, $P_1'$, $P_2'$,$\ldots$, $P_m$ such that $F_i\cap F_j'$=$\phi$ if and only if $i=j$. Then $\sum\limits_{j=1}^m\binom{|F_j|+|F_j'|}{|F_j|}^{-1}\leq 1$, where $F_j=\{U_i:P_j\in U_i\}$ and $F_j'=\{U_i:P_j'\in U_i\}$.
    \label{lemma 32}
\end{lemma}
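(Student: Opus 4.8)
The plan is to recognise this as Bollob\'{a}s's set-pair inequality and to prove it by the random-permutation (probabilistic) method. Write $V=\{U_1,U_2,\ldots,U_n\}$ for the node set of $\mathscr{C}$, so that each $F_j$ and each $F_j'$ is a subset of $V$. From the hypothesis ``$F_i\cap F_j'=\phi$ if and only if $i=j$'' I would first record two consequences: taking $i=j$ gives $F_j\cap F_j'=\phi$ for every $j$, so the $|F_j|+|F_j'|$ nodes meeting $P_j$ or $P_j'$ are distinct and in particular $|F_j|+|F_j'|\le n$; and for all $i\ne j$ the contrapositive gives $F_i\cap F_j'\ne\phi$, which in particular forces every $F_j$ and every $F_j'$ to be non-empty (this is anyway immediate, since in an FR code every packet has replication factor at least one).

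Next I would introduce a uniformly random linear order $\pi$ on $V$ and, for $j=1,\ldots,m$, define the event $A_j$ that under $\pi$ every node of $F_j$ precedes every node of $F_j'$. Because $F_j$ and $F_j'$ are disjoint, the restriction of $\pi$ to the $|F_j|+|F_j'|$ nodes of $F_j\cup F_j'$ is a uniformly random ordering of those nodes, and $A_j$ occurs precisely when the first $|F_j|$ of them are exactly the nodes of $F_j$; hence $\Pr[A_j]=\frac{|F_j|!\,|F_j'|!}{(|F_j|+|F_j'|)!}=\binom{|F_j|+|F_j'|}{|F_j|}^{-1}$.

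The key step is to show that $A_1,\ldots,A_m$ are pairwise disjoint. Suppose $A_i$ and $A_j$ both occur for some $i\ne j$. Choose $x\in F_i\cap F_j'$ and $y\in F_j\cap F_i'$, both of which exist by the hypothesis (applied to the index pairs $(i,j)$ and $(j,i)$). From $A_i$, since $x\in F_i$ and $y\in F_i'$, the node $x$ precedes $y$ under $\pi$; from $A_j$, since $y\in F_j$ and $x\in F_j'$, the node $y$ precedes $x$ under $\pi$ -- a contradiction. Therefore the $A_j$ are pairwise disjoint, so $\sum_{j=1}^m\Pr[A_j]=\Pr\bigl[\bigcup_{j=1}^m A_j\bigr]\le 1$, which is exactly the asserted inequality $\sum_{j=1}^m\binom{|F_j|+|F_j'|}{|F_j|}^{-1}\le 1$.

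I expect the only delicate point to be the disjointness argument: one must invoke the ``crossed'' intersections $F_i\cap F_j'$ and $F_j\cap F_i'$ (not $F_i\cap F_j$), and check that the hypothesis really supplies non-empty such intersections whenever $i\ne j$; once that is set up the contradiction is immediate. Everything else -- the probability computation for $A_j$ and summing probabilities over a disjoint family -- is routine. An alternative, purely linear-algebraic route is the exterior-algebra proof of Bollob\'{a}s's theorem, but the permutation argument is shorter and entirely self-contained in this setting.
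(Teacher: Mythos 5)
Your proof is correct and complete. The paper offers no argument of its own for this lemma: it simply invokes the FR-code/hypergraph correspondence and cites Theorem~6 of Chapter~1 of Berge, which is precisely the Bollob\'{a}s set-pair inequality you identified. Your random-permutation argument is the standard proof of that cited theorem, so in substance you follow the same route but make it self-contained: the computation $\Pr[A_j]=\binom{|F_j|+|F_j'|}{|F_j|}^{-1}$ (valid because $F_j\cap F_j'=\emptyset$), the pairwise disjointness of the events $A_j$ via the crossed intersections $F_i\cap F_j'$ and $F_j\cap F_i'$, and the final union bound are exactly what is needed. The only point worth making explicit is that the nonemptiness of those crossed intersections --- the sole place the ``only if'' direction of the hypothesis enters --- presupposes $m\ge 2$; for $m=1$ the claimed sum is a single term and the bound holds trivially.
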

\begin{proof}
Proof follows the Fact \ref{Hypergraph and FR code} and Theorem 6 in \cite[Chapter 1]{Berge89a}.
\end{proof}
\begin{lemma}
    An FR code $\mathscr{C}:(n,\theta,\vec{\alpha},\vec{\rho})$ with $U_i\nsubseteq U_j$ for each $i\neq j$ and $i,j=1,2,\ldots,n$, satisfies $\sum\limits_{i=1}^n\binom{\theta}{\alpha_j}^{-1} \leq 1$ and $n\leq\binom{\theta}{\lfloor \theta/2 \rfloor}$.
    \label{lemma 18}
\end{lemma}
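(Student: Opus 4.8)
The plan is to recognize the hypothesis $U_i\nsubseteq U_j$ (for all $i\neq j$) as the statement that the family $\{U_1,U_2,\ldots,U_n\}$ is an antichain (a Sperner system) inside the Boolean lattice of all subsets of the $\theta$-element packet set $\{P_1,P_2,\ldots,P_\theta\}$, and then to invoke the Lubell--Yamamoto--Meshalkin (LYM) inequality together with the unimodality of the binomial coefficients. Note that the first displayed sum should read $\sum_{i=1}^n\binom{\theta}{\alpha_i}^{-1}\le 1$, the summand depending on $\alpha_i$.

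First I would use Fact \ref{Hypergraph and FR code} to view the FR code as a hypergraph $(V,\mathcal{E})$ with $|V|=n$ and $|\mathcal{E}|=\theta$; equivalently, each node $U_i$ is a subset of the ground set $\{P_1,\ldots,P_\theta\}$ with $|U_i|=\alpha_i$ (this is exactly the picture given by the dual hypergraph, whose hyperedges are the node stars $\mathcal{E}(v_i)$ inside a $\theta$-element vertex set). The condition $U_i\nsubseteq U_j$ for all $i\neq j$ says precisely that no $U_i$ is contained in another; in particular the $U_i$ are pairwise distinct, so they form a genuine antichain. Then I would apply the LYM estimate for Sperner systems (the relevant theorem in \cite[Chapter 1]{Berge89a}): for any antichain $\{A_1,\ldots,A_n\}$ of subsets of a $\theta$-set one has $\sum_{i=1}^n\binom{\theta}{|A_i|}^{-1}\le 1$. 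Taking $A_i=U_i$ yields the first claimed inequality $\sum_{i=1}^n\binom{\theta}{\alpha_i}^{-1}\le 1$.

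For the second claim I would use that $\binom{\theta}{\lfloor\theta/2\rfloor}$ is the maximum of $\binom{\theta}{r}$ over $0\le r\le\theta$, so $\binom{\theta}{\alpha_i}^{-1}\ge\binom{\theta}{\lfloor\theta/2\rfloor}^{-1}$ for every $i$; summing over $i$ and combining with the LYM bound gives $n\binom{\theta}{\lfloor\theta/2\rfloor}^{-1}\le 1$, i.e.\ $n\le\binom{\theta}{\lfloor\theta/2\rfloor}$, which is Sperner's theorem. There is no substantial obstacle here: the two inequalities are immediate once the antichain condition is transported through the Hyper-FR (dual hypergraph) correspondence. The only point deserving care is the remark that $U_i\nsubseteq U_j$ forces the $U_i$ to be distinct, which is what legitimizes treating $\{U_1,\ldots,U_n\}$ as a Sperner family of size exactly $n$.
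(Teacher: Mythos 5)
Your proposal is correct and follows essentially the same route as the paper, which simply cites the Hyper-FR correspondence (Fact \ref{Hypergraph and FR code}) together with the dual form of the Sperner/LYM theorem from \cite[Chapter 1]{Berge89a}; you have merely written out the details that the paper leaves to the citation, including the correct observation that the summand should be $\binom{\theta}{\alpha_i}^{-1}$.
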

\begin{proof}
 Proof follows the Fact \ref{Hypergraph and FR code} and dual statement of Theorem $2$ of \cite[Chapter 1]{Berge89a}.
\end{proof}

Lemma \ref{lemma 32} and \ref{lemma 18} are the new bounds on FR codes with asymmetric parameters. 
\subsection{Bounds on Universally Good Fractional Repetition Codes}
In the subsection, various bounds on parameters of universally good FR code is calculated using bounds on linear hypergraphs.
Using bounds and dual properties of a linear hypergraph, one can observe Lemma \ref{linear FR code 22}, \ref{lemma 1}, \ref{lemma 37} and \ref{dual of lemma 37}.
\begin{lemma}
    An FR code $\mathscr{C}:(n,\theta,\vec{\alpha},\vec{\rho})$ with $|U_i\cap U_j|\leq 1$ (for each $i\neq j$ and $i,j=1,2,\ldots,n$), satisfies $\sum\limits_{j=1}^{\theta}\binom{\rho_j}{2} \leq \binom{n}{2}$.
    \label{linear FR code 22}
\end{lemma}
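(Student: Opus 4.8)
The plan is to translate the hypothesis into the statement that the associated hypergraph is linear, and then run a double-counting argument over pairs of nodes.

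First I would apply Fact~\ref{Hypergraph and FR code} to identify $\mathscr{C}:(n,\theta,\vec{\alpha},\vec{\rho})$ with a hypergraph $(V,\mathcal{E})$, where $V=\{v_1,\dots,v_n\}$ (one hypervertex per node), $\mathcal{E}=\{E_1,\dots,E_\theta\}$ (one hyperedge per packet), $v_i\in E_j$ exactly when $P_j\in U_i$, and $|E_j|=\rho_j$. Under this correspondence, for $i\neq j$ the number $|U_i\cap U_j|$ is the number of hyperedges containing both $v_i$ and $v_j$. Hence the hypothesis $|U_i\cap U_j|\leq 1$ for all $i\neq j$ is precisely the statement that $(V,\mathcal{E})$ is a linear hypergraph (equivalently, that $\mathscr{C}$ is universally good).

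Next I would count in two ways the pairs $(\{v_i,v_k\},E_\ell)$ with $\{v_i,v_k\}\subseteq E_\ell$. Summing over hyperedges, each $E_\ell$ contains exactly $\binom{|E_\ell|}{2}=\binom{\rho_\ell}{2}$ such vertex pairs, so the total equals $\sum_{\ell=1}^{\theta}\binom{\rho_\ell}{2}$. Summing instead over the $\binom{n}{2}$ unordered vertex pairs, linearity guarantees that each pair $\{v_i,v_k\}$ is contained in at most one hyperedge, so the total is at most $\binom{n}{2}$. Combining the two counts gives $\sum_{\ell=1}^{\theta}\binom{\rho_\ell}{2}\leq\binom{n}{2}$, which is the claim.

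I do not anticipate a real obstacle here; the only step that deserves care is the equivalence in the first paragraph, i.e. confirming that $|U_i\cap U_j|\leq 1$ for every $i\neq j$ is exactly what forces each node-pair to be over-counted at most once, which is what makes the second count a legitimate upper bound. One could alternatively deduce the inequality from the linear-hypergraph bounds recorded in Section~\ref{sec:2}, but the direct double count above is shorter and self-contained.
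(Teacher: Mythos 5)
Your proposal is correct and takes essentially the same route as the paper: the paper's proof consists of invoking the correspondence of Fact~\ref{Hypergraph and FR code} and then citing Theorem~3 of \cite[Chapter 1]{Berge89a}, which is exactly the linear-hypergraph inequality $\sum_{E\in\mathcal{E}}\binom{|E|}{2}\leq\binom{n}{2}$ that you establish. The only difference is that you supply the standard double-counting proof of that cited theorem (correctly noting that $|U_i\cap U_j|\leq 1$ means each vertex pair lies in at most one hyperedge, which is equivalent to linearity), making the argument self-contained.
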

\begin{proof}
The proof follows the Fact \ref{Hypergraph and FR code} and Theorem $3$ in \cite[Chapter 1]{Berge89a}.
\end{proof}
\begin{lemma}
An FR code $\mathscr{C}:(n,\theta,\vec{\alpha},\vec{\rho})$ with $|U_i\cap U_j|\leq 1$ (for each $i\neq j$ and $i,j=1,2,\ldots,n$), satisfies $\sum\limits_{i=1}^{n}\binom{\alpha_i}{2} \leq \binom{\theta}{2}$.
\label{lemma 1}
\end{lemma}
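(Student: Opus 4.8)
The plan is to prove the inequality by a double-counting argument on pairs of packets, which is exactly the FR-code translation of the Fisher-type bound for linear hypergraphs (Theorem~3 in \cite[Chapter~1]{Berge89a}) applied to the \emph{dual} hypergraph. In other words, Lemma~\ref{lemma 1} is the dual statement of Lemma~\ref{linear FR code 22}, and I would present it in both the elementary form and the hypergraph form.

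First I would record the combinatorial meaning of the hypothesis. The condition $|U_i\cap U_j|\le 1$ for all $i\ne j$ says that no two distinct packets $P_a,P_b$ are simultaneously stored in two distinct nodes: if $P_a,P_b\in U_i\cap U_j$ with $i\ne j$, then $|U_i\cap U_j|\ge 2$, a contradiction. Hence every unordered pair $\{P_a,P_b\}$ of distinct packets is contained in at most one node $U_i$.

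Next I would count, in two ways, the incidence set $\mathcal{I}=\{(\{P_a,P_b\},U_i):a\ne b,\ \{P_a,P_b\}\subseteq U_i\}$. Grouping $\mathcal{I}$ by the node, node $U_i$ contributes exactly $\binom{|U_i|}{2}=\binom{\alpha_i}{2}$ pairs, so $|\mathcal{I}|=\sum_{i=1}^n\binom{\alpha_i}{2}$. Grouping $\mathcal{I}$ by the pair of packets, each of the $\binom{\theta}{2}$ pairs $\{P_a,P_b\}$ contributes at most one term, by the previous paragraph, so $|\mathcal{I}|\le\binom{\theta}{2}$. Comparing the two counts gives $\sum_{i=1}^n\binom{\alpha_i}{2}\le\binom{\theta}{2}$. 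Alternatively, and in the spirit of the neighbouring lemmas, I would invoke Fact~\ref{Hypergraph and FR code} to view the FR code as the linear hypergraph $(V,\mathcal{E})$ with $|V|=n$, $|\mathcal{E}|=\theta$, $|\mathcal{E}(v_i)|=\alpha_i$, pass to the dual hypergraph $(V^{*},\mathcal{E}^{*})$ on $\theta$ hypervertices whose $n$ hyperedges have sizes $\alpha_1,\ldots,\alpha_n$, observe that it is again linear (the hyperedges of $\mathcal{E}^{*}$ corresponding to $U_i,U_j$ meet in at most one hypervertex precisely because $|U_i\cap U_j|\le1$), and apply Theorem~3 of \cite[Chapter~1]{Berge89a}.

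The argument is short, so there is no serious obstacle; the only point requiring care is the equivalence used in the first step, namely that the node-intersection hypothesis is exactly the linearity of the dual hypergraph — equivalently, that ``two packets lie in at most one common node'' and ``two nodes share at most one common packet'' are the same condition — and that the dual hypergraph indeed has the claimed parameters ($n$ hyperedges of sizes $\alpha_i$, and $\theta$ hypervertices). Once this is verified, both the direct count and the duality route are immediate.
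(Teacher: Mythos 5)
Your proof is correct and follows essentially the same route as the paper, which disposes of the lemma in one line as ``the dual statement of Lemma~\ref{linear FR code 22}'' (itself Theorem~3 of \cite[Chapter~1]{Berge89a} via Fact~\ref{Hypergraph and FR code}); your duality paragraph is exactly that argument, correctly noting that $|U_i\cap U_j|\le 1$ for all node pairs is equivalent to every packet pair lying in at most one common node. Your explicit double count of the incidence set $\mathcal{I}$ is simply the unpacked content of that cited theorem and is a welcome addition, since the paper leaves it implicit.
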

\begin{proof}
The proof follows the dual statement of the Lemma \ref{linear FR code 22}.
\end{proof}
\begin{lemma}
    An FR code $\mathscr{C}:(n,\theta,\vec{\alpha},\vec{\rho})$ with $|U_i\cap U_j|\leq 1$ (for each $i,j=1,2,\ldots,n$) and $\alpha_m=\alpha$ (for each $m=1,2,\ldots,n$) satisfies $n \leq \frac{\theta(\theta-1)}{\alpha(\alpha-1)}$.
    \label{lemma 37}
\end{lemma}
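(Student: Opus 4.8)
The plan is to obtain this bound as the constant-$\alpha_i$ specialization of Lemma~\ref{lemma 1}. First I would apply Lemma~\ref{lemma 1} to the code $\mathscr{C}:(n,\theta,\vec{\alpha},\vec{\rho})$ at hand, which is permitted since $|U_i\cap U_j|\le 1$ is assumed; this gives $\sum_{i=1}^{n}\binom{\alpha_i}{2}\le\binom{\theta}{2}$. Feeding in the extra hypothesis $\alpha_m=\alpha$ for every $m$, each summand equals $\binom{\alpha}{2}$, so the inequality collapses to $n\binom{\alpha}{2}\le\binom{\theta}{2}$, i.e.\ $n\,\tfrac{\alpha(\alpha-1)}{2}\le\tfrac{\theta(\theta-1)}{2}$. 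Dividing through by $\tfrac{\alpha(\alpha-1)}{2}$, which is positive once $\alpha\ge 2$, yields $n\le\frac{\theta(\theta-1)}{\alpha(\alpha-1)}$, as asserted. (The degenerate cases $\alpha\in\{0,1\}$ should be set aside, since there the displayed fraction is undefined and the statement carries no content.)

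For a self-contained version I would instead run the underlying double count directly, via the Hyper-FR correspondence of Fact~\ref{Hypergraph and FR code}. The key observation is that ``$|U_i\cap U_j|\le 1$ for all distinct $i,j$'' is equivalent to ``no two distinct packets are simultaneously stored on two distinct nodes'': if $P_a\ne P_b$ both lay in $U_i$ and in $U_j$ then $\{P_a,P_b\}\subseteq U_i\cap U_j$, forcing $|U_i\cap U_j|\ge 2$, and conversely any violation of the intersection bound produces such a pair. Now count triples (pair of packets, node containing that pair): each node $U_i$ contributes $\binom{\alpha_i}{2}=\binom{\alpha}{2}$ of them, for a total of $n\binom{\alpha}{2}$, while by the equivalence just noted each of the $\binom{\theta}{2}$ packet-pairs is counted at most once; hence $n\binom{\alpha}{2}\le\binom{\theta}{2}$ and the same rearrangement finishes the proof. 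Equivalently, one passes to the dual hypergraph, where $\mathscr{C}$ becomes a linear $\alpha$-uniform hypergraph on $\theta$ hypervertices with $n$ hyperedges, and bounds the number of hyperedges by the number $\binom{\theta}{2}$ of available hypervertex-pairs (this is exactly the route ``dual of Lemma~\ref{linear FR code 22}'' already used for Lemma~\ref{lemma 1}).

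I expect no genuine obstacle here: the statement is the uniform-degree instance of an inequality already in hand, and the only point requiring a word of care is the divisibility/degeneracy caveat $\alpha\ge 2$. It may be worth recording separately, as a remark rather than as part of the proof, that equality $n=\frac{\theta(\theta-1)}{\alpha(\alpha-1)}$ holds precisely when every pair of packets occurs in exactly one node and every node stores exactly $\alpha$ packets, i.e.\ when the code realizes a Steiner system $S(2,\alpha,\theta)$ with the nodes as blocks and the packets as points; this pinpoints the universally good FR codes that meet the bound.
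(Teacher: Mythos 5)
Your proof is correct and takes essentially the same route as the paper: specialize the linear-hypergraph counting inequality $\sum_{i=1}^{n}\binom{\alpha_i}{2}\le\binom{\theta}{2}$ to the uniform case $\alpha_i=\alpha$ and rearrange. (The paper's one-line proof cites Lemma~\ref{linear FR code 22} rather than Lemma~\ref{lemma 1}, which appears to be a swapped reference; you have invoked the correct one, and your explicit double count and the $\alpha\ge 2$ caveat are sensible additions.)
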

\begin{proof}
The proof follows the Lemma \ref{linear FR code 22} and $\alpha_m=\alpha$ (for each $m=1,2,\ldots,n$).
\end{proof}
\begin{lemma}
    An FR code $\mathscr{C}:(n,\theta,\vec{\alpha},\vec{\rho})$ with $|U_i\cap U_j|\leq 1$ (for each $i,j=1,2,\ldots,n$) and $\rho_m=\rho$ (for each $m=1,2,\ldots,n$) satisfies $\theta \leq \frac{n(n-1)}{\rho(\rho-1)}$.
    \label{dual of lemma 37}
\end{lemma}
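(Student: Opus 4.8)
The plan is to follow the same template used for Lemmas \ref{linear FR code 22}--\ref{lemma 37}: push the hypothesis through the Hyper-FR correspondence into the language of linear hypergraphs, and then either specialize Lemma \ref{linear FR code 22} to uniform hyperedge sizes or simply dualize Lemma \ref{lemma 37}.

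First I would invoke Fact \ref{Hypergraph and FR code} to view $\mathscr{C}:(n,\theta,\vec{\alpha},\vec{\rho})$ as a hypergraph $(V,\mathcal{E})$ with $|V|=n$, $|\mathcal{E}|=\theta$, $|\mathcal{E}(v_i)|=\alpha_i$, and $|E_j|=\rho_j$. The hypothesis $|U_i\cap U_j|\le 1$ for all $i\ne j$ is precisely the statement that $(V,\mathcal{E})$ is a linear hypergraph, and $\rho_m=\rho$ for every $m$ says that all hyperedges have the common size $\rho$, i.e.\ the hypergraph is $\rho$-uniform.

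Next I would apply Lemma \ref{linear FR code 22}, which gives $\sum_{j=1}^{\theta}\binom{\rho_j}{2}\le\binom{n}{2}$. Substituting $\rho_j=\rho$ turns the left side into $\theta\binom{\rho}{2}$, so
\[
\theta\,\frac{\rho(\rho-1)}{2}\ \le\ \frac{n(n-1)}{2},
\]
and dividing by $\rho(\rho-1)/2$ yields $\theta\le\frac{n(n-1)}{\rho(\rho-1)}$. (The division is legitimate for $\rho\ge 2$; for $\rho\le 1$ the statement is vacuous since the right-hand bound is infinite or the uniform-replication condition forces $\theta\le n$.) Equivalently, as the lemma's title suggests, one can pass to the dual hypergraph, which is again linear and which interchanges hypervertex degrees with hyperedge sizes (hence $n\leftrightarrow\theta$ and $\alpha\leftrightarrow\rho$); applying Lemma \ref{lemma 37} to this dual FR code — now having uniform hypervertex degree $\rho$ — gives the claim directly.

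I do not expect a genuine obstacle: the whole content sits in the hypergraph correspondence and in the previously established Lemmas \ref{linear FR code 22} and \ref{lemma 37}. The only point deserving a line of care is, in the dual-hypergraph route, that the dual is well defined, which requires the original hypergraph to have no isolated hypervertex; this holds automatically because every node of an FR code stores at least one packet, i.e.\ $\alpha_i\ge 1$ for all $i$.
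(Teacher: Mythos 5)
Your proposal is correct and is essentially the paper's argument: the paper also disposes of this lemma in one line by specializing the quadratic counting bound for linear hypergraphs to uniform hyperedge size, and your alternative dualization of Lemma \ref{lemma 37} is the route suggested by the lemma's label. The only discrepancy is that the paper's proof cites Lemma \ref{lemma 1} (with a garbled condition ``$\rho_\rho=\alpha$''), which appears to be a misprint, since the inequality $\sum_{j}\binom{\rho_j}{2}\le\binom{n}{2}$ that you correctly substitute $\rho_j=\rho$ into is Lemma \ref{linear FR code 22}; your version is the intended derivation.
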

\begin{proof}
The proof follows the Lemma \ref{lemma 1} and $\rho_\rho=\alpha$ (for each $m=1,2,\ldots,\theta$).
\end{proof}

  
\begin{remark}
Each FR code studied in the literature is associated with a specific hypergraph. 
The various FR codes and equivalent hypergraphs are listed in the Table \rom{2} (see the Appendix of the extended version \cite{DBLP:journals/corr/abs-1711-07631}). 
\end{remark}

As per the best knowledge of authors, all the bounds in Lemma \ref{linear FR code 22}, \ref{lemma 1}, \ref{lemma 37} and \ref{dual of lemma 37} are new bounds for the universally good FR codes with asymmetric parameters. 
In \cite{rr10,7422071}, FR codes with symmetric parameters satisfy the bounds in Lemma \ref{linear FR code 22}, \ref{lemma 1} and \ref{lemma 18} with equality.
It would be an interesting task to find more FR codes with asymmetric parameters which satisfy Lemma \ref{lemma 30}, \ref{lemma 32}, \ref{lemma 18}, \ref{linear FR code 22}, \ref{lemma 1}, \ref{lemma 37} and \ref{dual of lemma 37} with equality.


\section{Conclusion}\label{sec:4}
In this work, we have sketched the surface of the very interesting topic by establishing a correspondence between FR code and hypergraph. 
It is shown that each FR code studied in the literature is associated with some conditional hypergraph, where, for the hypergraph, size of hyperedges and degree of hypervertices are all non-zero positive integers. 
A construction of universally good adaptive FR code with asymmetric parameter is discussed in this work.
Bounds on universally good FR codes are obtained from the properties of hypergraphs. 
Some new bounds are found on universally good FR code defined on heterogeneous DSS.  
It would be an interesting work in the future to find the FR codes which satisfy those bounds with equality. 
Analysis of FR codes on generalized hypergraph could be another interesting problem.

\bibliography{cloud}
\bibliographystyle{IEEEtran}

 \section{Appendix}
 The Appendix includes the literature on FR codes, an algorithm to construct adaptive universally good FR code and a list of hypergraphs corresponding to FR codes from literature.
\subsection{Literature on Fractional Repetition Codes}
In the subsection, literature on FR codes are discussed in following three subsections. 
\subsubsection{Classification of constructions of FR codes on homogeneous DSSs}
FR codes are constructed in many papers.
The classification of constructions of FR codes defined on homogeneous DSSs, are following.

\begin{itemize}
	\item \textit{Algebraic Constructions}: 
	Using partial order sets, FR codes are constructed in \cite{8277958}. 
	In \cite{e18120441,e19100563}, FR codes are constructed using difference sets. 
	In \cite{e19100563}, FR codes are constructed using relative difference sets with $\lambda=1$, cyclic relative difference sets and non-cyclic relative difference sets. 
	FR codes are constructed from perfect difference families and quasi-perfect difference families in \cite{e18120441}.
	In the paper \cite{8255008}, two constructions of FR codes are  proposed which are based on circulant permutation matrices and affine permutation matrices. 
	In \cite{6703309}, FR code, called \textit{Cyclic repetition erasure code}, is constructed using circulant permutation matrices. 
	
	\item \textit{Algorithmic Constructions}: 
	In the paper \cite{DBLP:journals/corr/abs-1303-6801}, an algorithm is presented for constructing the node-packet distribution matrix of an FR code, where the node-packet distribution matrix is the matrix representation of the FR code. 
	Using those matrices, FR codes are enumerated up to a given number of system nodes.
	
	\item \textit{Constructions using Combinatorics}: 
	In \cite{rr10}, FR codes are constructed from fano plane, steiner systems, and their dual designs. 
	In \cite{6483351,6810361} and \cite{7422071}, FR codes are constructed using Affine resolvable designs, mutually orthogonal Latin squares and Kronecker product technique. 
	In \cite{7510829}, the construction of the optimal FR codes are proposed by t-designs. 
	In \cite{6912604}, the FR codes are constructed using uniform group divisible designs.
	The FR codes are constructed using regular packing designs in \cite{8309370}. 
	
	\item \textit{Constructions using Finite Geometry}: 
	Constructions of FR codes with the fewest number of storage nodes given the other parameters, based on finite geometries and corresponding bipartite cage graphs are considered in \cite{6120326}. 
	
	\item \textit{Constructions using Probabilistic Approach}:
	Randomized FR codes are constructed in \cite{6033980} using the balls-and-bins model.
	
	\item \textit{Graph Theory Constructions}: 
	In \cite{rr10,6120326,7118709,6570830,7510829}, FR codes are constructed using graphs, where the distribution of packets on nodes in an FR code are associated with adjacency of edges with vertices in respective graph. 
	In \cite{rr10}, FR codes are constructed from complete graphs and regular graphs.
	For given parameters and the fewest number of storage nodes, FR codes are constructed using bipartite cage graphs in \cite{6120326}. 
	In \cite{7118709}, FR codes are constructed using Tur$\grave{a}$n graph and cage graphs.
	Construction of FR codes based on regular graphs with a given girth, is considered in \cite{6570830}.
	In \cite{7510829}, Xu \textit{et al.} have looked FR code as biregualr graph, and shown that finding FR code with $k$ fault-tolerance and the minimum redundancy, is the same as the \textit{Zarankiewicz} problem in graph theory. 
	
	\item \textit{Other Constructions}: In \cite{DBLP:journals/corr/PrajapatiG16}, FR codes are constructed using ring construction. 
	In the construction nodes are placed in a circle and all the replicated packets are distributed to the nodes in specific order. 
\end{itemize}

For the given parameters, the existence of FR codes are investigated in \cite{DBLP:journals/corr/abs-1201-3547}.
Algorithms are presented in \cite{DBLP:journals/corr/abs-1305-4580} which compute the reconstruction degree $k$ and the repair degree $d$ of the FR code. 
For an FR code, the dual bound on file size is investigated in \cite{8277971}.

\subsubsection{FR code constructions with additional system properties}
Some FR codes with additional system property, are also studied in literature. 
Those FR codes are following.
\begin{itemize}
	\item In \cite{6570830}, the minimum distance of FR code is defined as the minimum number of nodes such that if these nodes fail then the active nodes can not recover the complete file.
	FR codes are constructed in \cite{7366761} using t-design. 
	Construction of FR codes based on regular graphs with a given girth, in particular cages, and analysis of their minimum distance is considered in \cite{6570830}.
	The FR code follows a Singleton-like bound on the minimum distance \cite{7422071}.
    For an FR code $\mathscr{C}:(n,\theta,\vec{\alpha},\vec{\rho})$ on an $(n,k,d)$-DSS, the minimum distance 
	\begin{equation}
	d_{min}\leq n-\left\lceil\frac{M(k)}{\alpha}\right\rceil+1.
	\label{chap 2 min dis bound 1}
	\end{equation}

	\item In \cite{7312417}, \textit{Adaptive} FR code is constructed, where the adaptive FR code can adapt the system changes such as disk failure. 
	The Adaptive FR code is obtained from symmetric design by removing some points from the blocks.
	\item In \cite{DBLP:conf/icmcta/Silberstein14,7005805,7282815} and \cite{7118709} Fractional Repetition Batch (FRB) code is studied which provides uncoded repairs and parallel reads of subsets of stored packets. 
	An FR code is a FRB code if any batch of $t$ information packets can be decoded by reading at most one symbol from each node.
	The FRB codes are constructed using regular graph and complete bipartite graph. 
	\item In \cite{7458387,LFRkics,MiYoungNAM2017}, the Locally Repairable FR codes are constructed using graphs, where an FR code is called \textit{Locally Repairable} FR code if $k>d$. 
	The Locally Repairable FR code constructions are proposed based on the symmetric design in \cite{7558231}.
	Locally recoverable FR code are constructed using Kronecker product technique in \cite{6810361}.
	For a Locally Repairable FR code, note that each failed storage node can have multiple local repair alternatives in the system. 
	In \cite{6570830}, Locally recoverable FR code are constructed using regular graph. 
	In \cite{6620277}, Locally Repairable FR codes are constructed using $t$-design. 
	Locally Repairable FR codes defined on ($n,k,d,\alpha$) DSS with maximum file size $M(k)$ and minimum distance $d_{min}$ satisfy the bound
	\begin{equation}
	d_{min}\leq n-\left\lceil\frac{M(k)}{\alpha}\right\rceil-\left\lceil\frac{M(k)}{d\alpha}\right\rceil+2,
	\end{equation}
	\item In \cite{8272004}, \textit{Pliable} FR codes are studied, where per node storage and per packet replication factor can be adjusted simultaneously. 
	The Pliable FR codes are constructed using Euclidean geometry, circulant permutation matrices, affine permutation matrices, extended Reed Solomon codes, Euler squares, geometry decomposition and zigzag codes in \cite{8272004}. 
    \item In \cite{DBLP:journals/corr/abs-1802-00872}, \textit{Load-Balanced} FR codes are are studied, where multiple node failures in the FR code can be repaired in a sequence by downloading at most one block from any other active node. 
    Note that the property reduces the number of disk reads which need to repair multiple nodes in the system. 
\end{itemize}
\subsubsection{Construction of FR codes with general parameters}
FR codes are generalized into following directions by dropping uniform constraints on parameters.
\begin{itemize}
	\item FR codes are generalized to \textit{Weak} \cite{DBLP:journals/corr/abs-1302-3681,DBLP:journals/corr/PrajapatiG16}, \textit{General} \cite{8094864,6763122} or \textit{Irregular} \cite{6804948} FR codes, where different number of packets are stored in each node.
	Those FR codes are constructed using partial regular graphs \cite{DBLP:journals/corr/abs-1302-3681} and group divisible designs \cite{6763122}. 
    In \cite{DBLP:journals/corr/PrajapatiG16}, FR codes are constructed using ring construction, where all the replicated packets are distributed to nodes in specific order.
	In \cite{8094864}, the FR code is constructed using packings, coverings, and pairwise balanced designs.
	In \cite{6804948}, FR codes are studied as uniform hypergraphs.   
	For the FR code on ($n,k,d$) DSS, the maximum file size 
	\begin{equation}
	M(k)\geq\sum_{i=1}^k d_i-\binom{k}{2},
	\end{equation}
	where $d_1\leq d_2\leq\ldots\leq d_n$ \cite{6763122}. 
	
	\item FR codes are generalized to \textit{variable} \cite{6811237} or \textit{heterogeneous} \cite{iet:/content/journals/10.1049/iet-com.2014.1225} FR codes, where each packet has different replication factor.
	The specific FR codes are constructed using group divisible designs.
	\item In \cite{7066224}, the \textit{Flexible} FR codes are generalized to FR codes, where node storage capacity of different nodes are different, the replication factor of different packets are different and any two distinct nodes contain at most one common packet.
	For the FR code, the maximum file size $M(k)$ is bounded as
	\begin{equation}
	    \sum_{i=n-k+1}^n\alpha_i-\binom{k}{2}\leq M(k)\leq\sum_{i=1}^k\alpha_i,
	\end{equation}
	where $\alpha_1\leq\alpha_2\leq\ldots\leq\alpha_n$.
	The FR code is constructed using pairwise balance designs and group divisible designs.
	A heuristic construction for the FR code is also given in \cite{7066224}.
\end{itemize}

\subsection{Construction of Adaptive Universally Good Fractional Repetition Codes}
Any sub-hypergraph (or induced sub-hypergraph) of a linear hypergraph is also a linear hypergraph. 
Hence, using recurrence arguments Fact \ref{fact hyperedge}, \ref{fact hypervertax} and \ref{fact induced sub-hypergraph}, one can construct a linear hypergraph for any given number of hypervertices.
A recursive construction (based on Fact \ref{fact hyperedge} and \ref{fact hypervertax}) of linear hypergraph is given in the Algorithm \ref{linear hypergraph construction}.
\begin{algorithm} 
    \caption{Construction of linear hypergraph.} 
    \begin{algorithmic} 
        \REQUIRE A positive integer $n$. 
        \ENSURE A linear hypergraph with $n$ hypervertices.
        \STATE  Initialize: $i=2$, $V=\{v_1,v_2\}$ and $\mathcal{E}=\{E_1\}$, where $E_1=\{v_1,v_2\}$.
        \WHILE{$|V|\leq n$}
        \STATE For some $I_i\subset\{1,\ldots,i\}$, calculate $E_{i+1}=\{v_t:t\in I_i\cup\{v_{i+1}\}\}$ 
        \IF{$|E_{i+1}\cap E_j|\leq1$ for each $j=1,\ldots,i$}
        \STATE Update $V$ by $V\cup\{v_{i+1}\}$
        \STATE Update $i$ by $i+1$
        \ENDIF
        \ENDWHILE
    \end{algorithmic}
    \label{linear hypergraph construction}
\end{algorithm}
In the Algorithm \ref{linear hypergraph construction}, the minimum size among all the hyperedges is $2$ so, the minimum replication factor of the FR code is $2$.
One can control the minimum replication factor for any $\rho_{min}\geq 2$ by initializing $E_1=\{v_1,\ldots,v_{\rho_{min}}\}$.
In the Algorithm \ref{linear hypergraph construction} for each while loop, it is assumed that set $I_i$ exists such that $|E_{i+1}\cap E_j|\leq1$ for each $j=1,\ldots,i$.
Note that, for the hypergraphs constructed from the Algorithm \ref{linear hypergraph construction}, FR codes are universally good adaptive FR codes with asymmetric parameters.
An example of construction of adaptive universally good FR code is given in Table \ref{recursive FR code}, where each step is a while loop.

\subsection{Hypergraphs corresponding to Fractional Repetition codes}
In this subsection, we have listed hypergraphs which are equivalent to existing FR codes in various literature.
Each FR code studied in the literature is associated with a specific hypergraph. 
The various FR codes and equivalent hypergraphs are listed in the Table \ref{FR code hypergraph}. 

\begin{table}[ht]
    \caption{FR codes and respective hypergraphs}
    \centering 
    \begin{tabular}{|l|l|}
        \hline
        FR code                                                           & Corresponding hypergraph                                                         \\        
        $\mathscr{C}:(n,\theta,\vec{\alpha},\vec{\rho})$                  & $(V,\mathcal{E})$                                                         \\        \hline \hline
        
        FR code \cite{rr10,7005805,8277958}                                       & $\rho$-uniform and $\alpha$-regular hypergraph                                            \\        \hline 
        
        HFR code \cite{iet:/content/journals/10.1049/iet-com.2014.1225}   & Linear and $\alpha$-regular hypergraph                                                    \\        \hline 
           
        IFR code \cite{6804948}                                           &  $\rho$-uniform hypergraph                                                                \\        \hline 
        
        VFR code \cite{6811237}                                           &  $\alpha$-regular hypergraph with                                                         \\         
                                                                          &  $|E|\in\{\rho_1,\rho_2\}$, for each $E\in\mathcal{E}$                                    \\        \hline 
        
        GFR code \cite{8094864,6763122}                                   & $\rho$-uniform and linear hypergraph                                                      \\        \hline
        
        WFR code \cite{DBLP:journals/corr/abs-1302-3681}                  & $\rho$-uniform hypergraph                                                                 \\        \hline 
        
        FR code \cite{e18120441,e19100563,8255008}                        & $\rho$-uniform, $\alpha$-regular and                                     \\       
        
        \cite{8309370,8277971,7510829,6120326,7422071}                    & linear hypergraph                                                                                          \\        \hline 
        
         Adaptive FR code \cite{7312417}                                   & $\rho$-uniform $\alpha$-regular hypergraph $s.t.$                                   \\    
                                                                           & induced sub-hypergraph is again                                \\
                                                                           & $\rho'$-uniform $\alpha'$-regular hypergraph                       \\        \hline               
        
        FFR code \cite{7066224}                                           & Linear hypergraph                                                                         \\        \hline 
        
        FR code \cite{DBLP:journals/corr/PrajapatiG16}                    & $\rho$-uniform hypergraph                                                                 \\        \hline 
        
        FR code \cite{7458383}                                            & Hypergraph                                                                                \\        \hline 
        
        FR code \cite{7366761,DBLP:journals/corr/abs-1303-6801}           & $\rho$-uniform and $\alpha$-regular hypergraph                                            \\        \hline 
        
        FR code \cite{7282815},                            & Intersecting, $\rho$-uniform and                                                         \\        
        
        \cite{6483351,6810361}                                           &  $\alpha$-regular hypergraph                                                               \\        \hline
        
        
        Locally repairable                                                & $\rho$-uniform, $\alpha$-regular and                                                     \\        
        
        FR code \cite{7458387,7558231,MiYoungNAM2017,LFRkics}             & and non-linear hypergraph                                                                \\        \hline 
        
        FRB codes \cite{DBLP:conf/icmcta/Silberstein14,7118709,7005805,7282815}   &  $\rho$-uniform and $\alpha$-regular hypergraph                                           \\       \hline 

        
    \end{tabular}
    \label{FR code hypergraph}
\end{table}

\begin{table}[ht]
	\caption{Recursive construction of FR code (Figure \ref{example figure})}
	\centering 
	\begin{tabular}{|l|l|l|l|} 
	\hline
	Step & Hypergraph 	& Adaptive Universally            & Fact       \\ 
	 & ($V,\mathcal{E}$)	& Good FR code            &        \\ 
	\hline
	$1$  & $V=\{v_1,v_2,v_3\}$, 	        &                    &            \\ 
	&$\mathcal{E}=\{E_1,E_2,E_4\}$, 	&                    & Initial           \\ 
	&$E_1=\{v_1,v_2\}$, 	            & $U_1=\{P_1,P_2\}$, & Hyper-    \\ 
	&$E_2=\{v_1,v_3\}$, 	            & $U_2=\{P_1,P_4\}$, & graph \\ 
	&$E_4=\{v_2,v_3\}$ 	            & $U_3=\{P_2,P_3\}$  &            \\ 
	\hline
	$2$&$V=\{v_1,v_2,v_3,v_4\}$	        & &         \\ 
	&$\mathcal{E}=\{E_1,E_2,E_4,E_3\}$, 	&                    &            \\ 
	&$E_1=\{v_1,v_2\}$, 	            & $U_1=\{P_1,P_2,P_3\}$, &     \\ 
	&$E_2=\{v_1,v_3\}$, 	            & $U_2=\{P_1,P_4\}$, & Fact \ref{fact hypervertax} \\ 
	&$E_4=\{v_2,v_3\}$, 	            & $U_3=\{P_2,P_3\}$,  &            \\ 
	&$E_3=\{v_1,v_4\}$ 	            & $U_4=\{P_3\}$  &            \\ 
	\hline
	$3$&$V=\{v_1,v_2,v_3,v_4\}$	        & &         \\ 
	&$\mathcal{E}=\{E_i:i=1,2,3,4,5\}$, 	&                    &            \\ 
	&$E_1=\{v_1,v_2\}$, 	            & $U_1=\{P_1,P_2,P_3\}$, &     \\ 
	&$E_2=\{v_1,v_3\}$, 	            & $U_2=\{P_1,P_4,P_5\}$, & Fact \ref{fact hyperedge} \\ 
	&$E_3=\{v_1,v_4\}$, 	            & $U_3=\{P_2,P_3\}$,  &            \\ 
	&$E_4=\{v_2,v_3\}$, 	            & $U_4=\{P_3,P_5\}$  &            \\
	&$E_5=\{v_2,v_4\}$ 	            &                &            \\
	\hline
	$4$&$V=\{v_1,v_2,v_3,v_4\}$	        & &         \\ 
	&$\mathcal{E}=\{E_i:i=1,2,\ldots,6\}$, 	&                    &            \\ 
	&$E_1=\{v_1,v_2\}$, 	            & $U_1=\{P_1,P_2,P_3\}$, &     \\ 
	&$E_2=\{v_1,v_3\}$, 	            & $U_2=\{P_1,P_4,P_5\}$, & Fact \ref{fact hyperedge} \\ 
	&$E_3=\{v_1,v_4\}$, 	            & $U_3=\{P_2,P_3,P_6\}$,  &            \\ 
	&$E_4=\{v_2,v_3\}$, 	            & $U_4=\{P_3,P_5,P_6\}$  &            \\
	&$E_5=\{v_2,v_4\}$, 	            & (Universally good                &            \\
	&$E_6=\{v_3,v_4\}$ 	            &    FR code (Figure \ref{example figure}))            &            \\
	\hline
	\end{tabular}
	\label{recursive FR code}
\end{table}
\end{document}